\documentclass[11pt]{article}

\usepackage[margin=1in]{geometry}
\usepackage[T1]{fontenc}
\usepackage{amsmath,amssymb,amsthm}
\usepackage{newtxtext}
\usepackage{newtxmath}
\usepackage{mathtools}
\usepackage{bm}
\usepackage{microtype}
\usepackage{booktabs}
\usepackage[authoryear,round]{natbib}
\usepackage{authblk}
\usepackage[colorlinks=true,linkcolor=blue,citecolor=blue,urlcolor=blue]{hyperref}

\hypersetup{
  pdftitle={On the Convex Transform Order of Order Statistics},
  pdfauthor={Yijia Li, Maochao Xu, Peng Zhao}
}

\allowdisplaybreaks

\theoremstyle{plain}
\newtheorem{theorem}{Theorem}
\newtheorem{lemma}{Lemma}
\newtheorem{proposition}{Proposition}
\newtheorem{corollary}{Corollary}

\newcommand{\R}{\mathbb{R}}
\newcommand{\Pp}{\mathbb{P}}
\newcommand{\Ee}{\mathbb{E}}
\newcommand{\bc}{\boldsymbol{\alpha}}
\newcommand{\qbar}{\overline{\boldsymbol q}}

\numberwithin{equation}{section}

\title{On the Convex Transform Order of Order Statistics}

\author[1]{Yijia Li}
\author[2]{Maochao Xu}
\author[3]{Peng Zhao}

\affil[1]{School of Statistics and Data Science, Shanghai University of International Business and Economics, Shanghai, China}
\affil[2]{Department of Mathematics, Illinois State University, Normal, IL 61790-4520, USA\footnote{\textit{Corresponding author:} \href{mailto:mxu2@ilstu.edu}{mxu2@ilstu.edu.}}\\}
\affil[3]{School of Mathematics and Statistics, Jiangsu Normal University, Xuzhou, Jiangsu, China}

\date{\today}

\begin{document}
\maketitle

\begin{abstract}
Let $X_1,\ldots,X_n$ be independent exponential random variables with arbitrary positive, not necessarily equal, rates, and let $Y_1,\ldots,Y_n$ be independent and identically distributed exponential random variables. We prove that, for every order statistic, the homogeneous order statistic is smaller than its heterogeneous counterpart in the convex transform order. Our results show that, relative to the homogeneous benchmark, heterogeneity stretches upper quantiles more strongly than lower quantiles, while the homogeneous order statistic ages faster in the convex-transform sense. We further extend the comparison to a proportional-hazards family that includes Weibull, Lomax, and Burr XII distributions. We give explicit conditions under which the comparison is preserved and show that the shape restriction is sharp within the Weibull family. These results characterize how component heterogeneity changes the distributional shape of order statistics beyond effects on location or scale.
\end{abstract}

\medskip
\noindent\textbf{Keywords:} Burr XII distribution; binomial distribution; convex transform order; threshold system.


\section{Introduction}\label{sec:intro}
Distributional shape is a fundamental feature in probability and statistics. For nonnegative random variables, this shape is often reflected in the right tail. Two distributions may have similar means or comparable central behavior, yet differ substantially in their upper quantiles.   A natural way to compare distributional shape is the convex transform order of \cite{vanZwet1970}. If \(Y\) and \(X\) have distribution functions \(G\) and \(F\), respectively, then \(Y\le_c X\) means that \(F^{-1}\circ G\) is convex, or equivalently that \(G^{-1}\circ F\) is concave, where \(\le_c\) denotes the convex transform order. In this sense, the calibration from the quantile scale of \(G\) to that of \(F\) becomes steeper toward the upper tail. So the upper quantiles of \(F\) are stretched more strongly than the lower quantiles relative to \(G\). For lifetime distributions, this order is also closely related to relative ageing and IFR-type comparisons; see \cite{ShakedShanthikumar2007}. 

This paper studies this shape comparison for order statistics. Order statistics arise in a broad range of probabilistic and statistical problems, including tail-index estimation and extreme-value inference \cite{DreesFerreiraDeHaan2004}, distributional modeling of ordered samples \cite{ButuceaDelmasDutfoyFischer2018}, probabilistic representations based on exponential samples \cite{BrussGrubel2003}, and recent studies of extreme order statistics and their limit behavior \cite{PitmanTang2022}. In threshold systems, a system event occurs after a prescribed number of component events. If $X_i$ denotes the event time of component $i$, then order statistic
$X_{k:n}$ is the time at which the $k$th event occurs. Thus $X_{k:n}$ may
represent the time until $k$ recovery resources have been restored, or the
time until a quorum of $k$ responses has been received. In classical
reliability terminology, if $X_i$ denotes the lifetime of component $i$, the
lifetime of a $k$-out-of-$n$:G system is $X_{n-k+1:n}$; see, for example,
\cite{BarlowProschan1981}. The same threshold structure is relevant to
business-interruption risk when service can resume after a prescribed number
of operationally distinct recovery resources become available.
Component times in these systems are rarely identical. Machines may fail at
different rates, recovery resources may have different capabilities, and
participating units may respond at different speeds. Nevertheless, a homogeneous independent and identically distributed (i.i.d.) model is often used as an analytical benchmark. Such a model can be calibrated to match an overall scale, such as the mean or a characteristic time, but this does not determine whether it also captures the shape of the threshold-time distribution. The main question is therefore whether heterogeneity merely changes the level of an order statistic or also changes how its upper quantiles are stretched relative to the lower ones. In terms of the convex transform order, this asks whether the homogeneous order statistic provides a scale-free shape reference for its heterogeneous counterpart. In reliability theory, this is a relative-aging question \cite{LaiXie2006,kochar2022stochastic}; related failure-rate properties of parallel systems have also been studied by \cite{ArabHadjikyriakouOliveira2020Failure}. In recovery and insurance settings, the same shape comparison is related to upper-tail behavior and stop-loss risk \cite{kaas2008modern}, and transform orders have been used to compare largest claim amounts in heterogeneous insurance portfolios \cite{Zhang2021Transform}.

The effect of component heterogeneity on threshold systems has long been
studied in reliability theory. In an early contribution,
\cite{BolandProschan1983} considered $k$-out-of-$n$ systems with unequal
component reliabilities. They identified regions in which the system
reliability is Schur-convex or Schur-concave in the vector of component
reliabilities. Their results show that heterogeneity can improve or worsen
pointwise system reliability, depending on the threshold and on the component
reliability levels. 
The question considered here is different. We compare the shape of the full
threshold-time distribution, not only the reliability at a fixed time. For the maximum of independent heterogeneous exponential random variables,
\cite{KocharXu2009} proved a convex-transform comparison with the
corresponding i.i.d. exponential benchmark. The special role of the homogeneous benchmark was later clarified by \cite{ArabHadjikyriakouOliveira2020}, who gave general criteria for noncomparability under the convex transform order. They also showed that the Kochar--Xu comparison does not extend to two arbitrary heterogeneous parallel systems. For general exponential order statistics, \cite{Yu2021} established the corresponding heterogeneous-versus-homogeneous comparison under the star order. More recently, \cite{ArabLandoOliveira2025} developed inequalities and bounds for expected order statistics from transform-ordered families, further illustrating the role of transform orders in the study of order statistics. Since the star order is weaker than the convex transform order, the result of \cite{Yu2021} still left open whether the convex-transform comparison extends to every order statistic.

This paper resolves that question. Let $X_1,\ldots,X_n$ be independent
exponential random variables with arbitrary positive rates, and let
$Y_1,\ldots,Y_n$ be i.i.d. exponential random variables. We prove that, for
every $k=1,\ldots,n$,
\begin{equation}\label{eq:intro-main}
        Y_{k:n}\le_c X_{k:n},
\end{equation}
where $\le_c$ denotes the convex transform order. We further extend the comparison beyond exponential components. Specifically,
we consider the proportional-hazards family \cite{kochar2007stochastic}
\begin{equation}\label{eq:family-intro}
\overline F_{\lambda,\sigma,\alpha,\xi}(x)=
\begin{cases}
\displaystyle\left[1+\xi(x/\sigma)^\alpha\right]^{-\lambda/\xi},
    &\xi>0,\\[5pt]
\displaystyle\exp\{-\lambda(x/\sigma)^\alpha\},
    &\xi=0,
\end{cases}
\qquad x\ge0 .
\end{equation}
When $\xi=0$, this is the Weibull family, with the exponential distribution
as the case $\alpha=1$. When $\xi>0$, it includes Lomax and Burr XII
distributions with regularly varying tails \cite{KleiberKotz2003}. For \(0<\alpha\le1\), we give explicit rate conditions under which the convex-transform ordering is preserved. We also show that, when \(\alpha>1\), the ordering does not hold in general.

The rest of the paper is organized as follows. Section~\ref{sec:exp} proves the exponential comparison by reducing it to a boundary matrix inequality. Section~\ref{sec:matrix} establishes this key inequality. Section~\ref{sec:extensions} develops the proportional-hazards extension. Section~\ref{sec:discussion} concludes the paper and discusses possible extensions.

\section{Main results}\label{sec:exp}
 Let $X_1,\ldots,X_n$ be independent exponential random variables with rates $\lambda_i>0$, respectively, and let $Y_1,\ldots,Y_n$ be i.i.d. exponential with rate $\lambda_0>0$. For $k\in\{1,\ldots,n\}$, denote
\(
        m=n-k+1.
\)
For ${\bm q}=(q_1,\ldots,q_n)\in[0,1]^n$, let $B_1,\ldots,B_n$ be independent Bernoulli variables with $\Pp(B_i=1)=q_i$, and put $N=\sum_{i=1}^n B_i$. Define
\begin{equation}\label{eq:Dm}
D_m(z)=\Pp\left(\operatorname{Bin}(n,z)\ge m\right)
      =\sum_{j=m}^n\binom{n}{j}z^j(1-z)^{n-j},
\end{equation}
where \(\operatorname{Bin}(n,z)\) denotes a binomial random variable with \(n\) trials and success probability \(z\). To connect the heterogeneous Bernoulli probabilities ${\bm q}$ with a homogeneous benchmark, suppose that
$
        0<\Pp(N\ge m)<1.
$
Since \(D_m(z)\) is strictly increasing in \(z\in(0,1)\), there is a unique
\(z\in(0,1)\) such that
\begin{equation}\label{eq:calibration}
\Pp(N\ge m)=D_m(z).
\end{equation}
That is, \(z\) is the common Bernoulli success probability that gives the same tail probability as the heterogeneous vector ${\bm q}$. To facilitate the discussion, define
\begin{equation}\label{eq:bm}
b_m=\binom{n}{m}z^m(1-z)^{n-m}.
\end{equation}
For \(i\ne j\), let
$$
        N_{-i}=\sum_{\ell\ne i}B_\ell,
        \qquad
        N_{-ij}=\sum_{\ell\ne i,j}B_\ell.
$$
Define
\begin{equation}\label{eq:alpha}
\alpha_i=q_i\Pp(N_{-i}=m-1),
\qquad
\bm{\alpha}=(\alpha_1,\ldots,\alpha_n)^{\mathsf T}.
\end{equation}
Finally, let \(A=(a_{ij})\) be the symmetric matrix with entries
\begin{equation}\label{eq:Adef}
a_{ii}=\alpha_i,
\qquad
a_{ij}
=
q_iq_j
\left[
\Pp(N_{-ij}=m-2)-\Pp(N_{-ij}=m-1)
\right],
\quad i\ne j.
\end{equation}
The following proposition plays a key role in the proof of our main result. Its proof is deferred to Section~\ref{sec:matrix}.

\begin{proposition}\label{prop:boundary}
For \({\bm q}\in(0,1)^n\) and \(z\) determined by Eq.~\eqref{eq:calibration},
\begin{equation}\label{eq:boundary}
A-\frac{m-nz}{m b_m(1-z)}
\bc\bc^{\mathsf T}\succeq0,
\end{equation}
where \(\succeq0\) denotes positive semidefiniteness.
\end{proposition}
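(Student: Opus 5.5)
The plan is to recognise \eqref{eq:boundary} as a second-order (convexity) condition and then prove the resulting quadratic-form inequality probabilistically.

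\textbf{Step 1: derivative interpretation.} Put $\Phi({\bm q})=\Pp(N\ge m)$. Then
\[
\partial_{q_i}\Phi=\Pp(N_{-i}=m-1),\qquad
\partial_{q_i}\partial_{q_j}\Phi=\Pp(N_{-ij}=m-2)-\Pp(N_{-ij}=m-1)\quad(i\ne j),
\]
and $\partial_{q_i}^2\Phi=0$. Hence in logarithmic coordinates $q_i=e^{u_i}$ the vector $\bc$ is the gradient and $A$ is the Hessian of $u\mapsto\Phi(e^{u})$.

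Let $h$ be the homogeneous calibration map, $h(\Phi)=$ the $z$ of \eqref{eq:calibration}. Using $D_m'(z)=m b_m/(z(1-z))$ and
\[
\frac{D_m''}{D_m'}=\frac{m-1}{z}-\frac{n-m}{1-z},
\]
one computes $h''/h'$, and the scalar $(m-nz)/(mb_m(1-z))$ should come out as $-h''/h'$, up to the exact choice of coordinate for $z$ (log, logit, or $-\log(1-z)$). I would first fix the coordinate so that the constants match exactly. Then \eqref{eq:boundary} says that $u\mapsto(\text{coordinate of }z)$ is convex along every line, with equality along the diagonal direction at homogeneous points. This explains the sign structure, but I do not expect convexity to be easy to prove globally, so I would prove the quadratic-form inequality directly.

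\textbf{Step 2: probabilistic rewriting.} Fix $v\in\R^n$ and let $S=\sum_i B_iv_i$. Since $\Pp(B_i=1,N=m)=q_i\Pp(N_{-i}=m-1)$, we have
\[
\bc^{\mathsf T}v=\Ee[S;\,N=m].
\]
Similarly $q_iq_j\Pp(N_{-ij}=m-2)=\Pp(B_i=B_j=1,N=m)$ and $q_iq_j\Pp(N_{-ij}=m-1)=\Pp(B_i=B_j=1,N=m+1)$. This gives
\[
v^{\mathsf T}Av=\Ee[S^2;\,N=m]-\Ee\Bigl[S^2-\textstyle\sum_iB_iv_i^2;\,N=m+1\Bigr].
\]
So the claim becomes
\[
\Ee[S^2;N=m]-\Ee\Bigl[S^2-\textstyle\sum_iB_iv_i^2;N=m+1\Bigr]\ \ge\ \frac{m-nz}{mb_m(1-z)}\,\Ee[S;N=m]^2 .
\]
This is trivial when $m\le nz$ as far as the right side is concerned, but the $N=m+1$ term is still negative. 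So the proof cannot split into the two signs of $m-nz$ cleanly.

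\textbf{Step 3: compare levels $m$ and $m+1$.} On $\{N=m+1\}$, I would write $S^2-\sum B_iv_i^2=\sum_{i\ne j}B_iB_jv_iv_j$. Using the exchange identity $q_\ell\,\Pp(N_{-\ell}=m,\ \cdot)$ versus $(1-q_\ell)\,\Pp(\cdot)$, I would express this through configurations with exactly $m$ successes obtained by deleting one success. Averaging over the deleted index should bound the $N=m+1$ contribution by a combination of $\Ee[S^2;N=m]$ and $\Ee[S;N=m]^2/\Pp(N=m)$, with coefficients involving $(n-m)/(m+1)$ and odds $q_\ell/(1-q_\ell)$. Cauchy--Schwarz, $\Ee[S;N=m]^2\le\Pp(N=m)\,\Ee[S^2;N=m]$, then reduces everything to a scalar inequality between $\Pp(N=m)$, $\Pp(N=m+1)$ and the homogeneous quantities $b_m$, $z$ under the calibration $\Pp(N\ge m)=D_m(z)$.

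\textbf{Step 4 and main obstacle.} That scalar inequality is where the heterogeneity enters. I expect it to follow from Newton/Hoeffding-type inequalities for Poisson-binomial distributions (log-concavity of $j\mapsto\Pp(N=j)$, and Hoeffding's comparison of $\Pp(N\ge m)$ with the binomial of equal tail). The main obstacle is Step 3: handling the negative $N=m+1$ term without losing the sharp constant, since equality must hold for $v=\mathbf 1$ at ${\bm q}=z\mathbf 1$.

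If the direct bound is not tight enough, I would fall back on a two-stage argument. First, decompose $v=t\mathbf 1+w$ with $w$ orthogonal to $\bc$ in a suitable weighted inner product. Then verify the inequality separately on the span of $\mathbf 1$ (a one-dimensional calculation) and on the complement, where it reduces to $A\succeq0$ restricted to $\bc^\perp$. That restricted positivity is essentially the quasi-convexity of $\Phi$ in log coordinates, which I would prove by induction on $n$, conditioning on $B_n$.
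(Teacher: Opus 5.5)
\textbf{There is a genuine gap.} Your Steps 1 and 2 are correct reformulations. The vector $\bc$ and the matrix $A$ are the gradient and Hessian of $u\mapsto\Phi(e^{u})$. With the coordinate $\log z$ one gets exactly $m b_m\,\nabla_u^2\log z=A-\frac{m-nz}{mb_m(1-z)}\bc\bc^{\mathsf T}$, so the proposition is precisely convexity of $u\mapsto\log z$; this is how it is used in Theorem~\ref{thm:exp}. After that, however, nothing is an argument.

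Step~3 is not carried out. Deleting a success and averaging turns the $\{N=m+1\}$ term into expressions with configuration-dependent weights $\sum_{\ell\notin\eta}q_\ell/(1-q_\ell)$. These cannot be replaced by scalar coefficients in front of $\Ee[S^2;N=m]$.

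The fallback fails as stated. With $v=t\mathbf 1+w$ and $\bc^{\mathsf T}w=0$, the form picks up the cross term $2t\,\mathbf 1^{\mathsf T}Aw$. One computes $(A\mathbf 1)_i=mq_i[\Pp(N_{-i}=m-1)-\Pp(N_{-i}=m)]$, which in general is not proportional to $\alpha_i$, so this term does not vanish. Nonnegativity on $\mathrm{span}(\mathbf 1)$ and on $\bc^{\perp}$ separately proves nothing.

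The missing idea is the identity $A\qbar=(m-\mu)\bc$, where $\qbar=(1-q_1,\dots,1-q_n)^{\mathsf T}$ and $\mu=\sum_iq_i$ (Lemma~\ref{lem:Aq}). It makes $\qbar$, not $\mathbf 1$, $A$-conjugate to $\bc^{\perp}$. Set $d=m-\mu$ and $\rho=\bc^{\mathsf T}\qbar>0$. Then $A-\frac{d}{\rho}\bc\bc^{\mathsf T}$ annihilates $\qbar$, and the proposition splits exactly into two facts:
(i) ${\bm u}^{\mathsf T}A{\bm u}\ge0$ whenever $\bc^{\mathsf T}{\bm u}=0$;
(ii) $d/\rho\ge\frac{m-nz}{mb_m(1-z)}$, which is \eqref{eq:boundary} evaluated at $v=\qbar$ and hence necessary.

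\textbf{Neither (i) nor (ii) is supplied.}

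For (i), you rightly identify it as quasi-convexity of $u\mapsto\Phi(e^u)$. But conditioning on $B_n$ writes $\Phi$ as the mixture $q_n\Pp(N_{-n}\ge m-1)+(1-q_n)\Pp(N_{-n}\ge m)$. Quasi-convexity is not preserved under such mixtures in general, so your induction has no mechanism. The paper proves (i) as follows:
\begin{itemize}
\item pass to odds $w_i=q_i/(1-q_i)$ and set $C=\prod_i(1-q_i)$, so that $\alpha_i=Cw_ie_{m-1}({\bm w}_{-i})$;
\item with $v_i=w_iu_i$, show ${\bm u}^{\mathsf T}A{\bm u}/C=F_m-D_{\bm v}^2e_{m+1}({\bm w})$, where $F_m\ge0$ and $D_{\bm v}e_m({\bm w})=0$ on $\bc^{\perp}$;
\item apply the Marcus--Lopes concavity of $e_{m+1}/e_m$ together with concavity of $e_m^{1/m}$.
\end{itemize}

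For (ii), this is not a one-dimensional computation. Even your check along $\mathbf 1$ is the heterogeneous inequality $m\Pp(N=m)-(m+1)\Pp(N=m+1)\ge\frac{m-nz}{mb_m(1-z)}\,m\,\Pp(N=m)^2$. Using $\rho=\Ee(N-m)^++d\,\Pp(N\ge m)$ (and the same identity at $z\mathbf 1$), (ii) is equivalent to
\[
(m-\mu)\,\Ee(\operatorname{Bin}(n,z)-m)^+\ge(m-nz)\,\Ee(N-m)^+.
\]
Here $z$ is matched to ${\bm q}$ through the tail probability, not the mean, so equal-mean Hoeffding-type comparisons do not give it directly, and you offer no substitute.

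The paper proves (ii) by minimizing $J=d/\rho$ over the fixed-tail set $\{\Pp_{\bm q}(N\ge m)=\tau\}$:
\begin{itemize}
\item along each pairwise level arc $J$ is fractional-linear, so the interior coordinates of a suitably selected minimizer can be equalized;
\item one-sided boundary perturbations exclude coordinates equal to $0$ or $1$;
\item hence the minimum is attained at $z\mathbf 1$.
\end{itemize}
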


We now use Proposition~\ref{prop:boundary} to establish the main exponential comparison.

\begin{theorem}\label{thm:exp}
Let \(X_1,\ldots,X_n\) be independent exponential random variables with arbitrary positive rates \(\lambda_1,\ldots,\lambda_n\), and let \(Y_1,\ldots,Y_n\) be i.i.d. exponential random variables with common positive rate \(\lambda_0\). Then
\begin{equation}\label{eq:exporder}
Y_{k:n}\le_c X_{k:n},
\qquad k=1,\ldots,n.
\end{equation}
\end{theorem}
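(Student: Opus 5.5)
Since the convex transform order is scale invariant, I take $\lambda_0=1$. The plan is to reduce Theorem~\ref{thm:exp} to Proposition~\ref{prop:boundary}.

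\textbf{Reduction to a concavity statement.} Put $q_i=q_i(t)=1-e^{-\lambda_i t}$. Then $\Pp(X_{k:n}\le t)=\Pp(N\ge k)$, where $N$ counts the components that have failed by time $t$. It is more convenient to work with survivors. Set $S=n-N$; then $X_{k:n}>t$ exactly when $S\ge m$, with $m=n-k+1$. I therefore reparametrize with $\bm q=(e^{-\lambda_i t})_i$ as survival probabilities, so that $\Pp(X_{k:n}>t)=\Pp(N\ge m)$ in the notation of the paper.

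For the homogeneous order statistic, $\Pp(Y_{k:n}>s)=D_m(e^{-s})$. The map $\phi=G^{-1}\circ F$ therefore sends $t$ to $s=-\log z(t)$, where $z(t)$ is defined by the calibration \eqref{eq:calibration} with $\bm q(t)$. The claim $Y_{k:n}\le_c X_{k:n}$ is equivalent to $\phi$ being concave in $t$. Writing $t$ along an arbitrary direction is not needed, since $t$ is scalar. Still, concavity of $t\mapsto -\log z(\bm q(t))$ along the curve $q_i=e^{-\lambda_i t}$ should follow from a stronger statement. Let $u_i=-\log q_i$, so that $u=\lambda t$ is linear in $t$. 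It suffices to show that $\psi(u):=-\log z(e^{-u_1},\dots,e^{-u_n})$ is concave on $(0,\infty)^n$, i.e.\ that its Hessian in $u$ is negative semidefinite. The lines $u=\lambda t$ are then a special case.

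\textbf{Computing the Hessian.} Differentiate the identity $\Pp(N\ge m)=D_m(z)$ implicitly. The needed derivatives are as follows:
\begin{itemize}
\item $\partial_{q_i}\Pp(N\ge m)=\Pp(N_{-i}=m-1)$, so $\partial_{u_i}\Pp(N\ge m)=-\alpha_i$.
\item The second $u$-derivatives produce $a_{ij}$ for $i\ne j$, via $\partial_{q_j}\Pp(N_{-i}=m-1)=\Pp(N_{-ij}=m-2)-\Pp(N_{-ij}=m-1)$. They produce $\alpha_i$ on the diagonal, because $\partial_{u_i}^2 q_i=q_i$.
\item $D_m'(z)=m\binom{n}{m}z^{m-1}(1-z)^{n-m}=m b_m/z$.
\item $D_m''(z)=D_m'(z)\,\dfrac{(m-1)-(n-1)z}{z(1-z)}$.
\end{itemize}
Writing $s=-\log z$, the chain rule gives $\dfrac{d D_m}{ds}=-mb_m$ and
\begin{equation*}
\frac{d^2 D_m}{ds^2}=mb_m\,\frac{m-nz}{1-z}.
\end{equation*}
The identity $\Pp(N\ge m)(u)=D_m(\psi(u))$ yields $-\bc=-mb_m\nabla\psi$, hence $\nabla\psi=\bc/(mb_m)$. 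At second order it gives
\begin{equation*}
A=-mb_m\nabla^2\psi+\frac{mb_m(m-nz)}{1-z}\,\nabla\psi\nabla\psi^{\mathsf T}.
\end{equation*}
Therefore
\begin{equation*}
-mb_m\nabla^2\psi=A-\frac{m-nz}{mb_m(1-z)}\,\bc\bc^{\mathsf T},
\end{equation*}
which is positive semidefinite by Proposition~\ref{prop:boundary}. Hence $\nabla^2\psi\preceq0$ and $\psi$ is concave.

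\textbf{Remaining details.} I still need to check three things.
\begin{enumerate}
\item For $t>0$ we have $\bm q\in(0,1)^n$ and $0<\Pp(N\ge m)<1$, so $z(t)$ is well defined and smooth by the implicit function theorem, since $D_m'>0$.
\item The map $\phi$ is continuous and increasing, with $\phi(0)=0$.
\item Concavity on the open interval $(0,\infty)$ extends to $[0,\infty)$.
\end{enumerate}
The main obstacle is purely the bookkeeping of the second-order implicit differentiation. In particular, the sign conventions must be handled carefully: survival versus failure probabilities, and $z$ versus $1-z$. The Hessian has to land exactly on the matrix in \eqref{eq:boundary}. If the signs come out reversed, I would switch to the failure parametrization with $k$ in place of $m$ and redo the computation. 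The substantive difficulty lies entirely in Proposition~\ref{prop:boundary}, which is assumed here.
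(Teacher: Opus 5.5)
Your proposal is correct and is essentially the paper's proof. Both use survival probabilities $q_i=e^{-\lambda_i x}$ and calibrate $z$ by $\Pp(N\ge m)=D_m(z)$. Both identify the first derivatives with $-\bc$ and the second derivatives with $A$, and both use $(D_m'+zD_m'')/D_m'=(m-nz)/(1-z)$ to reduce concavity of $-\log z$ to Proposition~\ref{prop:boundary}. The only difference is one of presentation: you compute the full Hessian of $\psi(u)=-\log z(e^{-u_1},\ldots,e^{-u_n})$, which gives joint concavity in the log-survival coordinates, whereas the paper differentiates along the ray $u=\boldsymbol\lambda x$ and evaluates the same quadratic form only at $\boldsymbol\lambda$; since Proposition~\ref{prop:boundary} is a full positive-semidefiniteness statement, the two arguments are equivalent.
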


\begin{proof}
Let \(F_H\) and \(F_0\) be the distribution functions of \(X_{k:n}\) and \(Y_{k:n}\), respectively. For \(x>0\), define

$$
        q_i(x)=e^{-\lambda_i x},
$$
and let \(N_x\) be the sum of independent Bernoulli random variables with success probabilities ${\bm q}(x)=(q_1(x),\ldots,q_n(x))$. Set
$$
        S(x)=\Pp(N_x\ge m).
$$
Since \(X_{k:n}>x\) exactly when at least \(m=n-k+1\) components exceed \(x\),
$
        \overline F_H(x)=S(x).
$
For the homogeneous sample,
$$
        \overline F_0(x)=D_m(e^{-\lambda_0x}).
$$
For each \(x>0\), let \(z(x)\in(0,1)\) be determined by
\begin{equation}\label{eq:zx}
           D_m(z(x))=S(x). 
\end{equation}
Then
\begin{equation}\label{eq:quantilemap}
F_0^{-1}(F_H(x))
=
-\lambda_0^{-1}\log z(x).
\end{equation}
Hence it is enough to show that
$
T(x)=-\log z(x)$ is concave.

From Eq.~\eqref{eq:zx}, it holds that
\[
        D_m'(z(x))z'(x)=S'(x),
\]
and hence
\[
        T'(x)=-\frac{z'(x)}{z(x)}
        =-\frac{S'(x)}{z(x)D_m'(z(x))}.
\]
A second differentiation gives
\[
\begin{aligned}
T''(x)
&=
-\frac{S''(x)H(z)-S'(x)H'(z(x))z'(x)}{H(z(x))^2}  \\
&=
-\frac{S''(x)}{zD_m'(z(x))}
+
\frac{(S'(x))^2(D_m'(z(x))+zD_m''(z(x)))}
     {D_m'(z(x))(zD_m'(z(x)))^2},
\end{aligned}
\]
where $H(z(x))=z(x)D_m'(z(x))$, $z'(x)=S'(x)/D_m'(z)$ and
$H'(z(x))=D_m'(z(x))+z(x)D_m''(z(x))$. Since $z(x)\in(0,1)$ and
$D_m'(z(x))>0$, the condition $T''(x)\le0$ 
is equivalent to
\begin{equation}\label{eq:eq1}
     z(x)D_m'(z(x))S''(x)
        \ge
        \frac{D_m'(z(x))+zD_m''(z(x))}{D_m'(z(x))}
        (S'(x))^2.
\end{equation}
Note that
\[
        \frac{D_m'(z(x))+z(x)D_m''(z(x))}{D_m'(z(x))}
        =
        \frac{m-nz(x)}{1-z(x)},
\]
Eq.~\eqref{eq:eq1} becomes
\begin{equation}\label{eq:curvature}
        z(x)D_m'(z(x))S''(x)
        \ge \frac{m-nz(x)}{1-z(x)}(S'(x))^2.
\end{equation}
It remains to compute the first two derivatives of \(S(x)\). To avoid
notation confusion, define, for \({\bm r}=(r_1,\ldots,r_n)\in[0,1]^n\),
\[
        \Phi({\bm r})
        =
        \Pp_{\bm r}\left(\sum_{i=1}^n B_i\ge m\right),
\]
where the \(B_i\)'s are independent Bernoulli random variables with
\(
        \Pp_{\bm r}(B_i=1)=r_i.
\)
For each \(i\), conditioning on \(B_i\) gives
\[
\begin{aligned}
\Phi({\bm r})
&=
r_i\Pp_{\bm r}(N_{-i}\ge m-1)
+
(1-r_i)\Pp_{\bm r}(N_{-i}\ge m),
\end{aligned}
\]
where
\(
        N_{-i}=\sum_{\ell\ne i}B_\ell.
\)
Since \(N_{-i}\) does not involve \(r_i\),
\[
\begin{aligned}
        \frac{\partial\Phi}{\partial r_i}
        &=
        \Pp_{\bm r}(N_{-i}\ge m-1)
        -
        \Pp_{\bm r}(N_{-i}\ge m)\\
        &=
        \Pp_{\bm r}(N_{-i}=m-1).
\end{aligned}
\]
For the second derivative, for \(j\ne i\), conditioning on \(B_j\) within
\(N_{-i}\) gives
\[
\begin{aligned}
\Pp_{\bm r}(N_{-i}=m-1)
&=
r_j\Pp_{\bm r}(N_{-ij}=m-2)+
(1-r_j)\Pp_{\bm r}(N_{-ij}=m-1).
\end{aligned}
\]
Hence
\[
\begin{aligned}
\frac{\partial}{\partial r_j}
\Pp_{\bm r}(N_{-i}=m-1)
&=
\Pp_{\bm r}(N_{-ij}=m-2)-
\Pp_{\bm r}(N_{-ij}=m-1).
\end{aligned}
\]
Note that
\[
        S(x)
        =
        \Phi(q_1(x),\ldots,q_n(x)),
        \qquad
        q_i(x)=e^{-\lambda_i x}.
\]
Therefore, using
\(
        q_i'(x)=-\lambda_iq_i(x),
\)
we obtain
\[
\begin{aligned}
S'(x)
&=
\sum_{i=1}^n
\left.
\frac{\partial\Phi}{\partial r_i}
\right|_{r={\bm q}(x)}
q_i'(x)\\
&=
-\sum_{i=1}^n
\lambda_iq_i(x)
\Pp_{{\bm q}(x)}(N_{-i}=m-1).
\end{aligned}
\]
Let
\[
        \alpha_i(x)
        =
        q_i(x)\Pp_{{\bm q}(x)}(N_{-i}=m-1),
\]
and write
\(
        \bm{\alpha}(x)
        =
        (\alpha_1(x),\ldots,\alpha_n(x))^{\mathsf T}.
\)
Then
\[
        S'(x)
        =
        -\boldsymbol\lambda^{\mathsf T}\bm{\alpha}(x),
\]
where
\(
        \boldsymbol\lambda
        =
        (\lambda_1,\ldots,\lambda_n)^{\mathsf T}.
\)

Differentiating \(S'(x)\) therefore yields
\[
\begin{aligned}
S''(x)
&=
\sum_{i=1}^n
\lambda_i^2q_i(x)
\Pp_{{\bm q}(x)}(N_{-i}=m-1)\\
&\quad+
\sum_{i\ne j}
\lambda_i\lambda_jq_i(x)q_j(x)
\Big[
\Pp_{{\bm q}(x)}(N_{-ij}=m-2)
-
\Pp_{{\bm q}(x)}(N_{-ij}=m-1)
\Big].
\end{aligned}
\]
Let \(A(x)=(a_{ij}(x))\) denote the matrix in Eq.~\eqref{eq:Adef} evaluated at
\[
        q_i=q_i(x),\qquad i=1,\ldots,n.
\]
Thus
\(
        a_{ii}(x)=\alpha_i(x),
\)
and, for \(i\ne j\),
\[
\begin{aligned}
a_{ij}(x)
&=
q_i(x)q_j(x)
\Big[
\Pp_{{\bm q}(x)}(N_{-ij}=m-2) 
-
\Pp_{{\bm q}(x)}(N_{-ij}=m-1)
\Big].
\end{aligned}
\]
Consequently,
\begin{equation}\label{eq:Sderivs}
             S''(x)
        =
        \boldsymbol\lambda^{\mathsf T}A(x)\boldsymbol\lambda.
\end{equation}
We now substitute Eq.~\eqref{eq:Sderivs} into Eq.
\eqref{eq:curvature}. Since
\[
        z(x)D_m'(z(x))=m b_m(x),
\]
where
\(
        b_m(x)
        =
        \binom{n}{m}
        z(x)^m(1-z(x))^{n-m},
\)
and
$
        (S'(x))^2=
        \boldsymbol\lambda^{\mathsf T}
        \bm{\alpha}(x)\bm{\alpha}(x)^{\mathsf T}
        \boldsymbol\lambda,
$
the left-hand side of Eq.~\eqref{eq:curvature} minus the right-hand side is
\[
\begin{aligned}
&z(x)D_m'(z(x))S''(x)
-
\frac{m-nz(x)}{1-z(x)}
\{S'(x)\}^2\\
&\qquad=
m b_m(x)
\boldsymbol\lambda^{\mathsf T}
A(x)
\boldsymbol\lambda-
\frac{m-nz(x)}{1-z(x)}
\boldsymbol\lambda^{\mathsf T}
\bm{\alpha}(x)\bm{\alpha}(x)^{\mathsf T}
\boldsymbol\lambda\\
&\qquad=
m b_m(x)
\boldsymbol\lambda^{\mathsf T}
\left[
A(x)
-
\frac{m-nz(x)}
     {m b_m(x)(1-z(x))}
\bm{\alpha}(x)\bm{\alpha}(x)^{\mathsf T}
\right]
\boldsymbol\lambda.
\end{aligned}
\]
For every fixed \(x>0\),
\({\bm q}(x)
\)
and \(z(x)\) is determined by
\[
        D_m(z(x))=S(x).
\]
Therefore Proposition~\ref{prop:boundary} applies at each \(x>0\), and the
matrix in brackets is positive semidefinite. Hence, Eq.~\eqref{eq:curvature} holds. It follows that
\(
        T(x)
\)
is concave, and  the required result follows.
\end{proof}

\section{Proof of Proposition~\ref{prop:boundary}}\label{sec:matrix}

The proof of Proposition~\ref{prop:boundary} is developed through several
lemmas. We first introduce the notation
\begin{equation}\label{eq}
\mu=\sum_{i=1}^n q_i,
\qquad
d=m-\mu,
\qquad
\rho=\sum_{i=1}^n(1-q_i)\alpha_i,
\end{equation}
and define
\(
\qbar=(1-q_1,\ldots,1-q_n)^{\mathsf T}.
\)

The first lemma gives a key identity used in the positive-semidefiniteness
argument.

\begin{lemma}\label{lem:Aq}
The matrix $A$ satisfies
\begin{equation}\label{eq:Aq}
        A\qbar=d\bc.
\end{equation}
\end{lemma}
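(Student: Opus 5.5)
We verify \eqref{eq:Aq} one coordinate at a time: for each fixed $i$ we show $\sum_j a_{ij}(1-q_j)=d\,\alpha_i$. The diagonal term is $a_{ii}(1-q_i)=\alpha_i(1-q_i)$. Every off-diagonal term carries the factor $q_i$, so we factor it out and write
\[
\sum_{j\ne i}a_{ij}(1-q_j)=q_i\sum_{j\ne i}q_j(1-q_j)\bigl[\Pp(N_{-ij}=m-2)-\Pp(N_{-ij}=m-1)\bigr].
\]
The goal is to express this sum through the distribution of $N_{-i}$ alone. It then combines with the diagonal term, and the final step is to check that the total equals $(m-\mu)\,q_i\Pp(N_{-i}=m-1)$.

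The key step is a single-index identity. Let $M=N_{-i}$, a Poisson-binomial sum over the index set $\{j\ne i\}$, and write $M_{-j}=N_{-ij}$. We use two standard facts:
\[
\Pp(M=r)=q_j\Pp(M_{-j}=r-1)+(1-q_j)\Pp(M_{-j}=r),
\]
and
\[
\Ee[B_j\mathbf 1\{M=r\}]=q_j\Pp(M_{-j}=r-1).
\]
Summing the second identity over $j\ne i$ gives $r\Pp(M=r)=\sum_{j\ne i}q_j\Pp(M_{-j}=r-1)$, and the analogous count of zeros gives $(n-1-r)\Pp(M=r)=\sum_{j\ne i}(1-q_j)\Pp(M_{-j}=r)$. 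Now set $r=m-1$. Combining the two expressions for the same event $\{M_{-j}=m-2,\ B_j=1\}$ and $\{M_{-j}=m-1,\ B_j=0\}$, we obtain
\[
q_j(1-q_j)\bigl[\Pp(M_{-j}=m-2)-\Pp(M_{-j}=m-1)\bigr]=(1-q_j)\,\Ee[B_j\mathbf 1\{M=m-1\}]-q_j\,\Ee[(1-B_j)\mathbf 1\{M=m-1\}].
\]
Both sides are equal to
\[
(1-q_j)q_j\Pp(M_{-j}=m-2)-q_j(1-q_j)\Pp(M_{-j}=m-1),
\]
so this is immediate. The right-hand side simplifies to $\Ee[(B_j-q_j)\mathbf 1\{M=m-1\}]$. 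Summing over $j\ne i$ yields
\[
\Ee\Bigl[\bigl(M-\textstyle\sum_{j\ne i}q_j\bigr)\mathbf 1\{M=m-1\}\Bigr]=\bigl(m-1-\mu+q_i\bigr)\Pp(N_{-i}=m-1).
\]

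Multiplying by $q_i$ and adding the diagonal term $(1-q_i)q_i\Pp(N_{-i}=m-1)$ gives
\[
q_i\Pp(N_{-i}=m-1)\bigl[(m-1-\mu+q_i)+(1-q_i)\bigr]=(m-\mu)\,\alpha_i=d\,\alpha_i,
\]
which is \eqref{eq:Aq}. The only delicate point is the step that rewrites the bracketed difference as a centered indicator expectation. Everything else is bookkeeping, including the edge cases $m=1$ and $m=n$, where $\Pp(N_{-ij}=-1)=0$ and $\Pp(N_{-ij}=n-1)=0$ make the same formulas hold unchanged.
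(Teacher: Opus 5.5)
Your proof is correct and is essentially the paper's argument: summing your per-index identity $q_j(1-q_j)\bigl[\Pp(N_{-ij}=m-2)-\Pp(N_{-ij}=m-1)\bigr]=\Ee\bigl[(B_j-q_j)\mathbf 1\{N_{-i}=m-1\}\bigr]$ over $j\ne i$ gives exactly the paper's identity \eqref{eq:bernoulli-identity} with $J=\{1,\ldots,n\}\setminus\{i\}$ and $r=m-1$, and the diagonal term is then added in the same way. The ``count of zeros'' identity you state is never used and can be dropped.
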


\begin{proof}
Let
$J\subset\{1,\ldots,n\}$ and  
\(
        W_J=\sum_{j\in J}B_j.
\)
 For $j\in J$, denote $W_{J,-j}=W_J-B_j$. Note that
\[
        r\Pp(W_J=r)
        =
        \sum_{j\in J}q_j\Pp(W_{J,-j}=r-1).
\]
This is because the left-hand side is the expected number of successes on the event
$\{W_J=r\}$, while the summand on the right is the probability that $B_j=1$
and $W_J=r$. Also, note that
\[
        \Ee W_J\,\Pp(W_J=r)
        =
        \sum_{j\in J}q_j\Pp(W_J=r).
\]
Subtracting the second identity from the first gives
\[
\begin{aligned}
(r-\Ee W_J)\Pp(W_J=r)
&=
\sum_{j\in J}q_j
\left[\Pp(W_{J,-j}=r-1)-\Pp(W_J=r)\right].
\end{aligned}
\]
For each $j\in J$,
\[
\begin{aligned}
\Pp(W_J=r)
&=
q_j\Pp(W_{J,-j}=r-1)
+
(1-q_j)\Pp(W_{J,-j}=r).
\end{aligned}
\]
Therefore
\[
\begin{aligned}
q_j\left[\Pp(W_{J,-j}=r-1)-\Pp(W_J=r)\right]
&=
q_j(1-q_j)
\left[\Pp(W_{J,-j}=r-1)-\Pp(W_{J,-j}=r)\right].
\end{aligned}
\]
Hence, for every integer
$r$,
\begin{equation}\label{eq:bernoulli-identity}
\sum_{j\in J}q_j(1-q_j)
\left[\Pp(W_{J,-j}=r-1)-\Pp(W_{J,-j}=r)\right]
=
(r-\Ee W_J)\Pp(W_J=r).
\end{equation}
By Eq.~\eqref{eq:Adef}, for any $i$, it hods that
\begin{align*}
    \frac{(A\qbar)_i}{q_i}
    &=(1-q_i)\Pp(N_{-i}=m-1)\\
    &\quad+\sum_{j\ne i}q_j(1-q_j)
    \left[\Pp(N_{-ij}=m-2)-\Pp(N_{-ij}=m-1)\right].
\end{align*}
Now, consider a specific case that
\[
        J=\{1,\ldots,n\}\setminus\{i\},
        \qquad
        r=m-1.
\]
Then, we have 
\(         W_J=N_{-i}
 \), $W_{J,-j}=N_{-ij}$ for $j\ne i$, and
\[
        \Ee W_J=\mu_{-i} =\sum_{j\ne i}q_j.
\]
Hence, Eq.  \eqref{eq:bernoulli-identity} becomes
\[
    \sum_{j\ne i}q_j(1-q_j)
    \left(\Pp(N_{-ij}=m-2)-\Pp(N_{-ij}=m-1)\right)
    =
    (m-1-\mu_{-i})\Pp(N_{-i}=m-1).
\]
Therefore
\begin{align*}
    (A\qbar)_i
    &=q_i(1-q_i+m-1-\mu_{-i})
      \Pp(N_{-i}=m-1)\\
    &=q_i(m-\mu)\Pp(N_{-i}=m-1)\\
    &=(m-\mu)\alpha_i
      =d\alpha_i.
\end{align*}
Since this holds for every $i$, we obtain $A\qbar=d\bc$.
\end{proof}

The next lemma gives the essential positivity on the hyperplane orthogonal
to $\bc$.

\begin{lemma}\label{lem:hyperplane}
For every ${\bm u}\in\R^n$ satisfying $\bc^{\mathsf T}{\bm u}=0$,
\begin{equation}\label{eq:hyperplane}
        {\bm u}^{\mathsf T}A{\bm u}\ge0.
\end{equation}
\end{lemma}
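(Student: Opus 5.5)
The plan is to give $\bm u^{\mathsf T}A\bm u$ a probabilistic form and then reduce the claim to a statement about how many negative eigenvalues $A$ has. Let $S=\{i:B_i=1\}$ and $U_S=\sum_{i\in S}u_i$. Since $q_i\Pp(N_{-i}=m-1)=\Pp(B_i=1,N=m)$, and for $i\ne j$ we have $q_iq_j\Pp(N_{-ij}=m-2)=\Pp(B_i=B_j=1,N=m)$ and $q_iq_j\Pp(N_{-ij}=m-1)=\Pp(B_i=B_j=1,N=m+1)$, expanding the square gives
\[
\bm u^{\mathsf T}A\bm u=\Ee\bigl[U_S^2\,\mathbf 1\{N=m\}\bigr]-\Ee\Bigl[\Bigl(U_S^2-\sum_{i\in S}u_i^2\Bigr)\mathbf 1\{N=m+1\}\Bigr],
\qquad
\bc^{\mathsf T}\bm u=\Ee\bigl[U_S\,\mathbf 1\{N=m\}\bigr].
\]
Equivalently, $\bm u^{\mathsf T}A\bm u=\frac{d^2}{ds^2}\Phi(q_1e^{-su_1},\ldots,q_ne^{-su_n})\big|_{s=0}$ and $\bc^{\mathsf T}\bm u=-\frac{d}{ds}(\cdot)\big|_{s=0}$. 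So Lemma~\ref{lem:hyperplane} says that along every direction where the first derivative of the tail probability vanishes, the second derivative is nonnegative. This is the form in which I will work.

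The structural fact I would aim for is that $A$ has at most one negative eigenvalue (a Lorentzian/hyperbolic signature). I would derive it from real stability. The generating polynomial $p(\bm y)=\Ee\prod_i y_i^{B_i}$ is a product of the linear factors $1-q_i+q_iy_i$, so it is real stable. Its degree-$m$ and degree-$(m+1)$ homogeneous parts, and suitable combinations of their directional derivatives, remain stable or hyperbolic. The quadratic form above is a second directional derivative of such a hyperbolic combination at a point of its hyperbolicity cone (all coordinates equal to $1$), and such Hessians are known to have at most one positive (or one negative) eigenvalue. If this route is too heavy, my fallback is induction on $n$. Condition on $B_n$, which writes $A$ as $q_n$ times the matrix for $m-1$ plus $(1-q_n)$ times the matrix for $m$ on the first $n-1$ coordinates, plus a rank-correcting border. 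Then propagate the signature with a Cauchy-interlacing argument; the base case $n=1$ or $n=2$ is checked directly.

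Once the signature is in hand, I would use Lemma~\ref{lem:Aq}, namely $A\qbar=d\bc$. If $d\ne0$ and $A$ is invertible, then $\bc^\perp$ is the $A$-orthogonal complement of $\qbar$. A symmetric matrix with at most one negative eigenvalue is positive semidefinite on the $A$-orthogonal complement of $\qbar$ exactly when $\qbar^{\mathsf T}A\qbar=d\rho\le0$, or when the negative direction is otherwise excluded. This settles the case $d\le0$ (i.e. $m\le\mu$) at once, since $\rho>0$. The cases $d=0$ and singular $A$ follow by continuity in $\bm q$.

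The main obstacle is the case $d>0$, i.e. $m>\mu$. There $\qbar^{\mathsf T}A\qbar>0$ and the simple argument above fails. To handle it I would use the symmetry $B_i\mapsto1-B_i$, which sends $(\bm q,m)$ to $(\qbar,n-m+1)$ and $\mu$ to $n-\mu$, so that $d$ changes sign. The task is then to check how $A$ and $\bc$ transform: I expect $\bc$ to be preserved up to the diagonal scaling $\operatorname{diag}(q_i/(1-q_i))$ and a sign, with $A$ changing by a matching congruence plus a rank-one term along $\bc$. A rank-one term along $\bc$ is harmless on $\bc^\perp$. The delicate step is exactly this bookkeeping: verifying that the reflected form restricted to $\bc^\perp$ is congruent to the original one, so that the $d\le0$ case transfers.
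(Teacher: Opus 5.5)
Your first two steps are sound. The probabilistic formula for $\bm u^{\mathsf T}A\bm u$ is correct. The inertia argument also settles $d<0$, and $d=0$ by continuity, once $A$ is known to have at most one negative eigenvalue. That fact needs nothing exotic. Put $w_i=q_i/(1-q_i)$ and $v_i=w_iu_i$, let $e_r$ be the elementary symmetric polynomials, and let $D_{\bm v}$ be the directional derivative along $\bm v$. Your second expectation equals $\prod_i(1-q_i)\,D_{\bm v}^2e_{m+1}(\bm w)$. This is a congruent copy of the Hessian of $e_{m+1}$ at a positive point, which has at most one positive eigenvalue. Your first expectation is a positive semidefinite form.

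The genuine gap is the case $d>0$, and it is not a matter of bookkeeping. Write $\bm y=c\qbar+\bm u$ with $\bc^{\mathsf T}\bm u=0$, and use $A\qbar=d\bc$ and $\bc^{\mathsf T}\qbar=\rho>0$. Then $\bm y^{\mathsf T}A\bm y=c^2d\rho+\bm u^{\mathsf T}A\bm u$. So for $d\ge0$ the lemma is equivalent to $A\succeq0$ on all of $\R^n$, and no inertia count can give that. For $n=2$, $m=1$ one computes $\det A=q_1q_2(1-q_1-q_2)=q_1q_2\,d$. Hence a negative eigenvalue is present exactly when $d<0$, and any argument for $d>0$ must use quantitative information beyond the signature.

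The reflection $B_i\mapsto1-B_i$ cannot supply this, for two reasons.

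\textbf{It misses part of the range.} It sends $d$ to $(n-m+1)-(n-\mu)=1-d$, not to $-d$. So the strip $0<d<1$ is not reached even in principle.

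\textbf{It reverses the inequality.} Let $A'$ and $\bc'$ be the objects for $(\qbar,n-m+1)$, and let $W=\operatorname{diag}(w_1,\ldots,w_n)$. A direct check of the entries gives $\bc=W\bc'$ and
\[
A+WA'W=\operatorname{diag}\bigl(\alpha_1/(1-q_1),\ldots,\alpha_n/(1-q_n)\bigr).
\]
So the off-diagonal part of the form changes sign; this is not a congruence plus a rank-one term along $\bc$. Positivity of $A'$ on $\bc'^{\perp}$ therefore gives only the upper bound $\bm u^{\mathsf T}A\bm u\le\sum_i\alpha_iu_i^2/(1-q_i)$ on $\bc^{\perp}$. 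The reflected statement is equivalent to the original one, not a reduction of it.

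What is missing is a comparison of the two pieces of your decomposition along the constraint $\bc^{\mathsf T}\bm u=0$. In these coordinates that constraint reads $D_{\bm v}e_m(\bm w)=0$. The paper gets the comparison from the Marcus--Lopes concavity of $e_{m+1}/e_m$. On the constraint this gives $D_{\bm v}^2e_{m+1}\le h\,D_{\bm v}^2e_m$ with $h=e_{m+1}(\bm w)/e_m(\bm w)$.

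Now set $F_m=\sum_{|J|=m}\prod_{i\in J}w_i\bigl(\sum_{i\in J}u_i\bigr)^2$ and $G_m=\sum_{|J|=m}\prod_{i\in J}w_i\sum_{i\in J}u_i^2$, so that $D_{\bm v}^2e_m=F_m-G_m$. Your first expectation is $\prod_i(1-q_i)F_m$. The case $h\le1$ then follows from $(1-h)F_m+hG_m\ge0$. The case $h>1$ follows from concavity of $e_m^{1/m}$, which forces $D_{\bm v}^2e_m\le0$ when $D_{\bm v}e_m=0$. This handles every sign of $d$ at once.
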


\begin{proof} Denote
\[
    w_i=\frac{q_i}{1-q_i},
    \qquad
    C=\prod_{i=1}^n(1-q_i),
\]
and, for the given vector ${\bm u}=(u_1,\ldots,u_n)^{\mathsf T}$, let
\[
        v_i=w_i u_i, \quad i=1,\ldots,n.
\]
Let $D_{\bm v}$ denote directional differentiation in direction
${\bm v}=(v_1,\ldots,v_n)$.
For $0\le r\le n$, let $e_r(w)$ denote the $r$th elementary symmetric
polynomial in ${\bm w}=(w_1,\ldots,w_n)$:
\[
        e_r({\bm w})
        =
        \sum_{\substack{J\subset\{1,\ldots,n\}\\ |J|=r}}
       \prod_{i\in J}w_i,
\]
with the convention $e_0({\bm w})=1$ and $e_r({\bm w})=0$ if $r<0$ or $r>n$.
For ${\bm w}_{-i}$, the same notation means that the $i$th coordinate is omitted:
\[
        e_r({\bm w}_{-i})
        =
        \sum_{\substack{J\subset\{1,\ldots,n\}\setminus\{i\}\\ |J|=r}}
        \prod_{j\in J}w_j.
\]
Since $v_i=w_i u_i$, it holds that
\[
        e_r({\bm w}+t{\bm v})
        =
        \sum_{|J|=r}
        \prod_{i\in J}w_i\prod_{i\in J}(1+tu_i).
\]
Differentiating twice at $t=0$ gives, for $1\le r\le n$,
\begin{equation}
\begin{aligned}\label{eq:D2er}
        D_v^2e_r(w)
        &=
        \sum_{|J|=r} \prod_{i\in J}w_i
        \left\{
        \left(\sum_{i\in J}u_i\right)^2
        -
        \sum_{i\in J}u_i^2
        \right\}\\
        &=F_r-G_r.
\end{aligned}
\end{equation}
where
\[
    F_r=\sum_{|J|=r} \prod_{i\in J}w_i
        \left(\sum_{i\in J}u_i\right)^2,
    \qquad
    G_r=\sum_{|J|=r} \prod_{i\in J}w_i \sum_{i\in J}u_i^2.
\]
Next, since
\[
        C=\prod_{i=1}^n(1-q_i),
        \qquad
        w_i=\frac{q_i}{1-q_i},
\]
we have
\[
\begin{aligned}
        \Pp(N_{-i}=m-1)
        &=
        \sum_{\substack{J\subset\{1,\ldots,n\}\setminus\{i\}\\ |J|=m-1}}
        \prod_{j\in J}q_j
        \prod_{\ell\notin J,\,\ell\ne i}(1-q_\ell)  \\
        &=
        \frac{C}{1-q_i}\,e_{m-1}(w_{-i}).
\end{aligned}
\]
Thus
\[
        \alpha_i
        =
        q_i\Pp(N_{-i}=m-1)
        =
        C w_i e_{m-1}({\bm w}_{-i}),
\]
or equivalently
\[
        \frac{\alpha_i}{C}=w_i e_{m-1}({\bm w}_{-i}).
\]
On the other hand,
\[
        D_{\bm v} e_m({\bm w})
        =
        \sum_{i=1}^n v_i e_{m-1}({\bm w}_{-i})
        =
        \sum_{i=1}^n w_i u_i e_{m-1}({\bm w}_{-i}).
\]
Consequently,
\begin{equation}\label{eq:alpha-app-D}
    \frac{\bc^{\mathsf T}{\bm u}}{C}
    =
    \sum_{i=1}^n u_i\frac{\alpha_i}{C}
    =
    D_v e_m({\bm w}).
\end{equation}
Thus $\bc^{\mathsf T}{\bm u}=0$ implies
\begin{equation}\label{eq:D-em-zero}
        D_{\bm v} e_m({\bm w})=0.
\end{equation}
The entries of $A$ can also be written as
\[
    \frac{a_{ii}}{C}=w_i e_{m-1}({\bm w}_{-i})
\]
and, for $i\ne j$,
\[
    \frac{a_{ij}}{C}=w_iw_j
    \left[e_{m-2}({\bm w}_{-i,-j})-e_{m-1}({\bm w}_{-i,-j})\right].
\]
Expanding the quadratic form therefore gives
\begin{equation}\label{eq:A-quadratic}
    \frac{u^{\mathsf T}Au}{C}=F_m-D_{\bm v}^2e_{m+1}({\bm w}),
\end{equation}
where $e_{n+1}\equiv0$ when $m=n$.

Let
\[
    f=e_m({\bm w}),\qquad g=e_{m+1}({\bm w}),\qquad h=\frac{g}{f}.
\]
By the Marcus--Lopes concavity theorem for elementary symmetric functions
\cite{MarcusLopes1957,Sra2020}, the ratio
$g/f=e_{m+1}/e_m$ is concave on the positive orthant.  Since $f=e_m({\bm w})$,
Eq.~\eqref{eq:D-em-zero} gives $D_{\bm v} f=0$. Hence, the first-derivative terms in
the second directional derivative of $g/f$ vanish, and
\[
    0\ge D_{\bm v}^2\!\left(\frac{g}{f}\right)
      =\frac{D_{\bm v}^2g}{f}-\frac{gD_{\bm v}^2f}{f^2}.
\]
Thus
\begin{equation}\label{eq:D2g-bound}
        D_{\bm v}^2g\le hD_{\bm v}^2f.
\end{equation}
Combining Eqs. \eqref{eq:D2er}, \eqref{eq:A-quadratic}, and
\eqref{eq:D2g-bound}, we obtain
\begin{equation}\label{eq:A-lower}
    \frac{{\bm u}^{\mathsf T}A{\bm u}}{C}
    \ge F_m-hD_{\bm v}^2f .
\end{equation}
If $0\le h\le1$, then, using $D_{\bm v}^2f=F_m-G_m$,
\[
        F_m-hD_{\bm v}^2f
        =
        (1-h)F_m+hG_m
        \ge0.
\]
If $h>1$, the concavity of $f^{1/m}$, together with
Eq.~\eqref{eq:D-em-zero}, gives $D_{\bm v}^2f\le0$. Hence
\[
        F_m-hD_{\bm v}^2f\ge0.
\]
Therefore ${\bm u}^{\mathsf T}A{\bm u}\ge0$.
\end{proof}

Combining Lemmas~\ref{lem:Aq} and~\ref{lem:hyperplane} yields the following
positive-semidefinite inequality.
\begin{lemma}
\label{prop:first-psd}
With $d$ and $\rho$ as in Eq.~\eqref{eq},
\begin{equation}\label{eq:first-psd}
    A-\frac{d}{\rho}\bc\bc^{\mathsf T}\succeq0.
\end{equation}
\end{lemma}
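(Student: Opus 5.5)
We decompose an arbitrary vector along $\qbar$ and the hyperplane $\bc^{\mathsf T}{\bm u}=0$, then use Lemma~\ref{lem:Aq} to evaluate the cross terms and Lemma~\ref{lem:hyperplane} to control the hyperplane part.

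First note that $\bc^{\mathsf T}\qbar=\sum_i(1-q_i)\alpha_i=\rho>0$, since every $q_i\in(0,1)$ and $\alpha_i=q_i\Pp(N_{-i}=m-1)>0$ for $1\le m\le n$. Hence $\qbar$ is not in the hyperplane $H=\{{\bm u}:\bc^{\mathsf T}{\bm u}=0\}$, and every ${\bm x}\in\R^n$ decomposes uniquely as
\[
{\bm x}=t\qbar+{\bm u},\qquad t=\frac{\bc^{\mathsf T}{\bm x}}{\rho},\qquad {\bm u}\in H .
\]

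Next expand the quadratic form. Set $M=A-\frac{d}{\rho}\bc\bc^{\mathsf T}$. Lemma~\ref{lem:Aq} gives $A\qbar=d\bc$, so
\[
M\qbar=d\bc-\frac{d}{\rho}\bc(\bc^{\mathsf T}\qbar)=d\bc-d\bc=0 .
\]
Thus $\qbar$ lies in the kernel of $M$. By symmetry of $M$, the cross terms $2t\,{\bm u}^{\mathsf T}M\qbar$ and the term $t^2\qbar^{\mathsf T}M\qbar$ both vanish, and
\[
{\bm x}^{\mathsf T}M{\bm x}={\bm u}^{\mathsf T}M{\bm u}={\bm u}^{\mathsf T}A{\bm u}-\frac{d}{\rho}(\bc^{\mathsf T}{\bm u})^2={\bm u}^{\mathsf T}A{\bm u}.
\]

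Finally, since ${\bm u}\in H$, Lemma~\ref{lem:hyperplane} gives ${\bm u}^{\mathsf T}A{\bm u}\ge0$. Therefore ${\bm x}^{\mathsf T}M{\bm x}\ge0$ for all ${\bm x}$, which is \eqref{eq:first-psd}.

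The substantive work sits in the two lemmas already proved. The only point to check here is $\rho>0$, which holds under the standing assumption ${\bm q}\in(0,1)^n$ and is what makes the decomposition and the division by $\rho$ legitimate. The argument holds whatever the sign of $d$, because the $\bc\bc^{\mathsf T}$ term vanishes on $H$.
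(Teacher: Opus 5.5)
Your proof is correct and matches the paper's argument: it uses the same decomposition ${\bm x}={\bm u}+(\bc^{\mathsf T}{\bm x}/\rho)\qbar$, the kernel identity $M\qbar=0$ from Lemma~\ref{lem:Aq}, and Lemma~\ref{lem:hyperplane} on the hyperplane $\bc^{\mathsf T}{\bm u}=0$. Your explicit check that $\rho>0$ when ${\bm q}\in(0,1)^n$ and $1\le m\le n$ is a small but worthwhile addition, since the paper divides by $\rho$ here without justifying it.
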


\begin{proof}
Let
\[
        M=A-\frac{d}{\rho}\bc\bc^{\mathsf T}.
\]
By Lemma~\ref{lem:Aq},
\[
        M\qbar
        =
        A\qbar-\frac{d}{\rho}\bc\bc^{\mathsf T}\qbar
        =
        d\bc-\frac{d}{\rho}\bc\rho
        =
        0,
\]
because $\bc^{\mathsf T}\qbar=\rho$. Now fix arbitrary ${\bm x}\in\R^n$ and set
\[
        u={\bm x}-\frac{\bc^{\mathsf T}{\bm x}}{\rho}\qbar .
\]
Then
\[
        \bc^{\mathsf T}{\bm u}
        =
        \bc^{\mathsf T}{\bm x}
        -
        \frac{\bc^{\mathsf T}{\bm x}}{\rho}\bc^{\mathsf T}\qbar
        =
        \bc^{\mathsf T}{\bm x}
        -
        \frac{\bc^{\mathsf T}{\bm x}}{\rho}\rho
        =
        0.
\]
Moreover,
\[
        {\bm x}={\bm u}+\frac{\bc^{\mathsf T}{\bm x}}{\rho}\qbar .
\]
Since $M\qbar=0$ and $M$ is symmetric,
\[
        {\bm x}^{\mathsf T}M{\bm x}
        =
        {\bm u}^{\mathsf T}M{\bm u}.
\]
Finally, because $\bc^{\mathsf T}{\bm u}=0$,
\[
        {\bm u}^{\mathsf T}M{\bm u}
        =
        {\bm u}^{\mathsf T}A{\bm u}
        -
        \frac{d}{\rho}
        \bigl(\bc^{\mathsf T}{\bm u}\bigr)^2
        =
        {\bm u}^{\mathsf T}A{\bm u}
        \ge0
\]
by Lemma~\ref{lem:hyperplane}. Therefore $M\succeq0$, which yields
Eq.~\eqref{eq:first-psd}.
\end{proof}

  By Lemma~\ref{prop:first-psd}, 
Proposition~\ref{prop:boundary} will follow once we prove
\begin{equation}\label{eq:scalar-target}
\frac{d}{\rho}
\ge
\frac{m-nz}{m b_m(1-z)}.
\end{equation}
The idea is to compare the coefficient $d/\rho$ with its value at the
homogeneous Bernoulli vector having the same tail probability. We do this by
fixing a tail level and minimizing $d/\rho$ over all Bernoulli probability
vectors with that same level.

For a Bernoulli probability vector
\(
        {\bm q}=(q_1,\ldots,q_n)\in[0,1]^n,
\)
let
\(
        S({\bm q})=\Pp_{\bm q}(N\ge m)\), and
        \(U({\bm q})=\Ee_{\bm q}(N-m)^+,
\)
where $a^+=\max\{a,0\}$. Also write
\(
        d({\bm q})=m-\sum_{i=1}^n q_i
\)
and
\begin{equation}\label{eq:rho_q}
     \rho({\bm q})
        =
        \sum_{i=1}^n q_i(1-q_i)
        \Pp_{\bm q}(N_{-i}=m-1).
\end{equation}
Since
\[
        (N-m)^+
        +
        m\mathbf 1_{\{N\ge m\}}
        =
        N\mathbf 1_{\{N\ge m\}},
\]
we have
\[
\begin{aligned}
        U({\bm q})+mS({\bm q})
        =
        \sum_{i=1}^n
        q_i\Pp_{\bm q}(N_{-i}\ge m-1).
\end{aligned}
\]
Moreover,
\[
\begin{aligned}
&\Pp_{\bm q}(N_{-i}\ge m-1)-S({\bm q})=
(1-q_i)\Pp_{\bm q}(N_{-i}=m-1).
\end{aligned}
\]
Therefore
\begin{equation}\label{eq:rho-stoploss1}
\begin{aligned}
\rho({\bm q})
&=
U({\bm q})+d({\bm q})S({\bm q}).
\end{aligned}
\end{equation}

For $\tau\in(0,1)$, define the fixed-tail level set
\[
        \mathcal L_\tau
        =
        \{{\bm q}\in[0,1]^n:S({\bm q})=\tau\}.
\]
On $\mathcal L_\tau$, define
\[
        J({\bm q})
        =
        \frac{d({\bm q})}{\rho({\bm q})}.
\]
We shall show that, for every $\tau\in(0,1)$, the minimum of $J$ over
$\mathcal L_\tau$ is attained at a homogeneous Bernoulli vector. The proof
has three parts: First, we show that a selected
minimizer may be taken to have all its interior coordinates equal. Second, a
one-sided perturbation rules out coordinates equal to zero. Third, another
one-sided perturbation rules out coordinates equal to one. The selected
minimizer must therefore be homogeneous. Applying this fixed-level result at
\[
        \tau=S({\bm q})=D_m(z)
\]
will then give the required result.

\medskip

The first lemma identifies the interior structure of a selected minimizer.

\begin{lemma}\label{lem:level-equalization}
Fix $\tau\in(0,1)$. The function $J$ has a minimizer on
$\mathcal L_\tau$ whose interior coordinates are all equal. Consequently,
there is a minimizer of the form
\[
    (\underbrace{1,\ldots,1}_{h},
     \underbrace{p,\ldots,p}_{r},
     \underbrace{0,\ldots,0}_{g}),
    \qquad 0<p<1.
\]
\end{lemma}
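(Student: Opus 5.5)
The plan is to combine a compactness argument with a two-coordinate exchange that exploits the multilinear structure of $S$, $d$ and $\rho$. For \emph{existence}, note that $\mathcal L_\tau$ is a closed subset of $[0,1]^n$, hence compact. I will check that $\rho>0$ on $\mathcal L_\tau$, or at least that $J$ is lower semicontinuous there with values in $(-\infty,+\infty]$ and not identically $+\infty$. Then a minimizer exists.

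For $\rho>0$ I would argue as follows. If $\rho({\bm q})=0$, every interior coordinate $q_i\in(0,1)$ has $\Pp(N_{-i}=m-1)=0$. Then $N\ge m$ is decided regardless of the interior Bernoullis, which forces $S({\bm q})\in\{0,1\}$ and contradicts $\tau\in(0,1)$. Consequently $J$ is continuous on the compact set $\mathcal L_\tau$ and attains its minimum.

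Next comes the \emph{pair reduction}. Fix a minimizer and two interior coordinates $q_i\neq q_j$, and freeze all other coordinates. The law of $B_i+B_j$ depends on $(q_i,q_j)$ only through $s=q_i+q_j$ and $t=q_iq_j$, namely
\[
\Pp(B_i+B_j=2)=t,\qquad \Pp(B_i+B_j=1)=s-2t,\qquad \Pp(B_i+B_j=0)=1-s+t .
\]
Since $S$ and $U$ are linear in the law of $N$, both are affine in $(s,t)$, and $d=m-s-\sum_{\ell\ne i,j}q_\ell$ is affine in $s$. The cross term $dS$ in $\rho=U+dS$ from Eq.~\eqref{eq:rho-stoploss1} is bilinear, so instead I will use the direct formula \eqref{eq:rho_q}. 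Expanding $\sum_\ell q_\ell(1-q_\ell)\Pp(N_{-\ell}=m-1)$ shows $\rho$ is symmetric and multilinear in $(q_i,q_j)$, since each summand has degree at most one in each of $q_i,q_j$ except the factor $q_\ell(1-q_\ell)$ for $\ell\in\{i,j\}$. I will verify that these combine into an affine function of $(s,t)$; if not, I fall back to the representation $\rho=\sum_\ell \Pp(B_\ell=1,\ N=m,\ \ldots)$-type identities.

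The admissible region in the $(s,t)$-plane is
\[
\max(0,s-1)\le t\le s^2/4 ,
\]
and the constraint $S=\tau$ cuts it in a line segment. Along that segment $J=d/\rho$ is a ratio of affine functions with nonvanishing denominator, hence monotone or constant.

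For the \emph{selection}, among all minimizers I choose one with the fewest interior coordinates. Suppose it had two distinct interior coordinates. The current point lies strictly inside the segment, because $t<s^2/4$ strictly and $t>\max(0,s-1)$ strictly. Since $J$ is linear-fractional and attains its minimum at an interior point of the segment, $J$ is constant on the whole segment.

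Moving to the endpoint on $t=0$ or $t=s-1$ then produces another minimizer with strictly fewer interior coordinates, which is a contradiction. The only other possibility is that the segment's endpoints both lie on $t=s^2/4$. This happens when the line is a chord of the parabola, and then the only exit is through equal coordinates. In that case I replace the pair by the equal point and use a secondary potential, e.g.\ minimizing $\sum_{\text{interior}}q_i^2$ among minimizers with the fewest interior coordinates. Equalizing strictly decreases this sum at fixed $s$ while keeping $J$ minimal.

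The degenerate case, where the coefficients of $S$ in both $s$ and $t$ vanish, means the constraint is void. Then the whole two-dimensional region is feasible, and a linear-fractional function attains its minimum at a vertex, namely a boundary point, again reducing the interior count. The conclusion is that all interior coordinates coincide, giving the stated form $(1,\ldots,1,p,\ldots,p,0,\ldots,0)$.

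I expect the main obstacle to be verifying that $\rho$, given by Eq.~\eqref{eq:rho_q}, is genuinely affine in $(s,t)$ when only the pair $(q_i,q_j)$ varies. The $\ell\in\{i,j\}$ terms contain $q_i^2$, so this needs care. A second obstacle is handling the chord case cleanly through the secondary potential.
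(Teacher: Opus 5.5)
Your skeleton matches the paper's: compactness with $\rho>0$, freezing all but one pair, passing to $(s,t)=(q_i+q_j,q_iq_j)$, and using that a linear-fractional function with an interior minimum on a segment is constant. However, two steps fail as written.

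\textbf{First gap: $J$ linear-fractional along the level line.} You leave this key claim unproved, and the route you propose for it cannot work, because $\rho$ is \emph{not} affine in $(s,t)$. Let $R=\sum_{\ell\ne i,j}B_\ell$, $a=\Pp(R\ge m)$, $b=\Pp(R=m-1)$ and $c=\Pp(R=m-2)$, so that $S=a+bs+(c-b)t$. The two summands of \eqref{eq:rho_q} with $\ell\in\{i,j\}$ add up to $b(s-s^2+2t)+(c-b)(2t-st)$. Their nonaffine part is $-s\,[bs+(c-b)t]=-s\,(S-a)$, so $\rho$ is genuinely quadratic unless $b=c=0$. The identity you discarded is exactly what is needed. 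By \eqref{eq:rho-stoploss1}, $\rho=U+dS$, and on $\mathcal L_\tau$ the factor $S$ is the constant $\tau$. Hence $\rho=U+\tau d$ is affine along the level line, which is the paper's step. The same observation settles your degenerate case $b=c=0$ without a ``vertex'' argument on a region with a curved boundary. There $\rho=U+\tau d$ is constant while $d$ strictly decreases in $s$, so a minimizer can never contain an interior pair.

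\textbf{Second gap: the chord case.} Your secondary-potential argument is wrong. Along $bs+(c-b)t=\tau-a$ the sum $s$ is not fixed unless $c=b$, and $\sum q_i^2$ can increase. For instance, for $c=0$ the equal-tail point satisfies $(1-t)^2=(1-x)(1-y)$, and $x=0.01$, $y=0.99$ give $2t^2\approx1.62>x^2+y^2$. But the chord case never occurs. Write the line as $(s_0+\lambda\sigma,\,t_0+\lambda\theta)$ and set $\phi(\lambda)=(s_0+\lambda\sigma)^2/4-t_0-\lambda\theta$. Then $\phi$ is convex with $\phi(0)>0$, so $\phi(\lambda)\ge\phi(0)+\lambda\phi'(0)>0$ whenever $\lambda\phi'(0)\ge0$. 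In that direction the segment never touches the parabola, so it must leave the feasible region through $t=0$ or $t=s-1$, i.e.\ at a point where $q_i$ or $q_j$ equals $0$ or $1$. With this, your fewest-interior-coordinates selection closes at once.

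\textbf{Comparison with the paper.} Once repaired, your route genuinely differs from the paper's. The paper moves to the \emph{diagonal} endpoint $(t,t)$, which preserves the number of interior coordinates. It therefore needs the maximal count $r_*$, the potential $\Psi$, and the variance inequality \eqref{eq:pair-product}. Your route is cheaper for the lemma as stated. The paper's route buys more: a minimizer that has simultaneously the \emph{largest} number of interior coordinates and equal interior coordinates. That is what Lemma~\ref{lem:no-one} (case $\kappa=r$) and Lemma~\ref{lem:scalar} rely on.
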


\begin{proof}
The set $\mathcal L_\tau$ is compact, because it is the inverse image of
$\{\tau\}$ under the continuous polynomial map
\[
        {\bm q}\longmapsto S({\bm q}),
\]
restricted to the compact cube $[0,1]^n$. Since $J({\bm q})$ has
$\rho({\bm q})$ in its denominator, we first verify that
$\rho({\bm q})>0$ throughout $\mathcal L_\tau$.

Suppose that ${\bm q}\in\mathcal L_\tau$ has $h$ coordinates equal to one
and $r$ coordinates in $(0,1)$, with the remaining coordinates equal to
zero. Since $\tau\in(0,1)$, we must have
\[
        h<m\le h+r.
\]
Indeed, $h\ge m$ would imply $S({\bm q})=1$, whereas $h+r<m$ would imply
$S({\bm q})=0$. Hence there is at least one interior coordinate $i$. For
such an $i$, the event $N_{-i}=m-1$ has positive probability, because
exactly $m-1-h$ successes can occur among the remaining $r-1$ interior
Bernoulli variables. Therefore
\[
        q_i(1-q_i)
        \Pp_{\bm q}(N_{-i}=m-1)>0,
\]
and hence
\[
        \rho({\bm q})>0.
\]
Thus $J$ is continuous on the compact set $\mathcal L_\tau$, and therefore
attains its minimum.

Let ${\bm q}\in\mathcal L_\tau$ be a minimizer of $J$, and suppose that two
of its interior coordinates satisfy
\[
        q_i=x,\qquad q_j=y,\qquad 0<x<y<1.
\]
Keep all other coordinates fixed and let
\[
        R=\sum_{\ell\ne i,j}B_\ell.
\]
Define
\[
        a=\Pp(R\ge m),\qquad
        b=\Pp(R=m-1),\qquad
        c=\Pp(R=m-2).
\]
With the remaining coordinates fixed, write $S(x,y)$ and $U(x,y)$ for the
corresponding values of $S({\bm q})$ and $U({\bm q})$. Set
\[
        u=x+y,\qquad v=xy.
\]
Conditioning on $B_i+B_j$ gives
\begin{equation}\label{eq:pair-S}
        S(x,y)=a+bu+(c-b)v.
\end{equation}
Similarly,
\begin{equation}\label{eq:pair-U}
        U(x,y)
        =
        \Ee(R-m)^+ + au+bv.
\end{equation}
The first term in Eq.~\eqref{eq:pair-U} is constant because the distribution of
$R$ is fixed. Moreover,
\[
        d({\bm q})
        =
        m-\sum_{\ell\ne i,j}q_\ell-u,
\]
so $d({\bm q})$ is affine in $u$.

On the fixed tail level $S(x,y)=\tau$, Eq.~\eqref{eq:pair-S} becomes
\[
        bu+(c-b)v=\tau-a.
\]
Thus the admissible pairs $(u,v)$ lie on a straight line. Both
$d({\bm q})$ and, by Eq.~\eqref{eq:pair-U}, $U({\bm q})$ are affine functions
of $(u,v)$. Since Eq.~\eqref{eq:rho-stoploss1} gives
\[
        \rho({\bm q})
        =
        U({\bm q})+\tau d({\bm q})
\]
on $\mathcal L_\tau$, $\rho({\bm q})$ is also affine along this line.
Consequently,
\[
        J({\bm q})
        =
        \frac{d({\bm q})}{\rho({\bm q})}
\]
is fractional linear along each connected pairwise level arc. A nonconstant
fractional-linear function is monotone on an interval and therefore cannot
attain a minimum at an interior point. Hence, if $J$ attains its minimum at
an interior point of such an arc, it must be constant along that arc.

We first rule out the case $b=c=0$. If  $b=c=0$, then
\[
        S(x,y)=a=\tau
\]
is independent of $x$ and $y$. Eq.~\eqref{eq:pair-U} becomes
\[
        U(x,y)=\Ee(R-m)^+ +\tau u.
\]
At the same time,
\[
        d({\bm q})
        =
        m-\sum_{\ell\ne i,j}q_\ell-u.
\]
Hence,  $\rho({\bm q})$ is constant. Since $x$ and $y$ are interior,
$u=x+y$ can be increased slightly while remaining feasible. This decreases
$d({\bm q})$, and therefore decreases $J({\bm q})$, contradicting
minimality. Thus $b$ and $c$ cannot both be zero.

Define
\[
        \psi(t)=a+2bt+(c-b)t^2.
\]
Since $b$ and $c$ are not both zero,
\[
        \psi'(t)
        =
        2\{b(1-t)+ct\}>0,
        \qquad 0<t<1.
\]
Furthermore,
\[
\begin{aligned}
        S(x,y)-\psi(x)
        &=
        (y-x)\{b(1-x)+cx\}>0,\\
        \psi(y)-S(x,y)
        &=
        (y-x)\{b(1-y)+cy\}>0.
\end{aligned}
\]
Hence
\[
        \psi(x)<S(x,y)<\psi(y),
\]
and there is a unique $t\in(x,y)$ such that
\begin{equation}\label{eq:equal-tail-pair}
        S(t,t)=S(x,y).
\end{equation}

Moreover,
\[
\begin{aligned}
\frac{\partial S}{\partial x}
&=
b(1-y)+cy>0,\\
\frac{\partial S}{\partial y}
&=
b(1-x)+cx>0,
\end{aligned}
\]
so $(x,y)$ is an interior point of its pairwise level arc. Since $J$
is fractional linear along this arc and has a minimum at the interior point
$(x,y)$, it must be constant along the arc. Therefore replacing $(x,y)$
by $(t,t)$ preserves both the tail level and the minimum value of $J$.

We now use this equalization property to select one minimizer whose interior
coordinates are all equal. Let
\[
        \mathcal M_\tau
        =
        \arg\min_{{\bm q}\in\mathcal L_\tau}J({\bm q})
\]
be the set of minimizers. Since $J$ is continuous and $\mathcal L_\tau$ is
compact, $\mathcal M_\tau$ is nonempty and compact. It is enough to show
that $\mathcal M_\tau$ contains one minimizer whose interior coordinates are
all equal.

Let $r_*$ be the largest number of coordinates in $(0,1)$ among vectors in
$\mathcal M_\tau$. To select a convenient representative from
$\mathcal M_\tau$, define
\[
        \Psi({\bm q})
        =
        e_{r_*}
        \bigl(q_1(1-q_1),\ldots,q_n(1-q_n)\bigr),
        \qquad {\bm q}\in[0,1]^n.
\]
Since $\mathcal M_\tau$ is compact and $\Psi$ is continuous, there exists
${\bm q}^*\in\mathcal M_\tau$ such that
\[
        \Psi({\bm q}^*)
        =
        \max_{{\bm q}\in\mathcal M_\tau}\Psi({\bm q}).
\]
By the definition of $r_*$, there is a minimizer with exactly $r_*$
interior coordinates, for which $\Psi$ is strictly positive. Any vector
with fewer than $r_*$ interior coordinates has $\Psi=0$. Therefore the
selected vector ${\bm q}^*$ has exactly $r_*$ interior coordinates. For
this vector, $\Psi({\bm q}^*)$ is simply the product of
$q_i^*(1-q_i^*)$ over its interior coordinates.

Suppose, to the contrary, that two interior coordinates of
${\bm q}^*$ satisfy
\[
        0<x<y<1.
\]
By the pairwise equalization argument above, there exists a unique
$t\in(x,y)$ such that replacing $(x,y)$ by $(t,t)$ preserves the tail
level and the value of $J$. Denote the resulting vector by
$\widetilde{\bm q}$, and then
\(
        \widetilde{\bm q}\in\mathcal M_\tau.
\)
We show that this replacement strictly increases the product of the two corresponding Bernoulli variances:
\begin{equation}\label{eq:pair-product}
        [t(1-t)]^2>x(1-x)y(1-y).
\end{equation}
If $b>0$, then \eqref{eq:equal-tail-pair} gives
\[
        b(x+y)+(c-b)xy
        =
        2bt+(c-b)t^2.
\]
Since
\[
        x+y>2\sqrt{xy},
\]
we have
\[
        S(x,y)>\psi(\sqrt{xy}).
\]
Because $S(x,y)=\psi(t)$ and $\psi$ is strictly increasing, it follows
that
\[
        t^2>xy.
\]
A direct calculation then yields
\[
\begin{split}
    \left[t(1-t)\right]^2-x(1-x)y(1-y)
    &=
    (t^2-xy)
    \left[
        (1-t)^2+\frac{c}{b}xy
    \right]>0.
\end{split}
\]
If $b=0$, then $c>0$, and \eqref{eq:equal-tail-pair} gives
\[
        t^2=xy.
\]
Therefore
\[
        \left[t(1-t)\right]^2-x(1-x)y(1-y)
        =
        xy(x+y-2\sqrt{xy})>0.
\]
Thus Eq.~\eqref{eq:pair-product} holds in either case.

The replacement preserves the number of interior coordinates, the tail
level, and the value of $J$. Hence
$\widetilde{\bm q}\in\mathcal M_\tau$ and it also has exactly $r_*$
interior coordinates. All other coordinates are unchanged, while  Eq.
\eqref{eq:pair-product} gives
\[
        \Psi(\widetilde{\bm q})
        >
        \Psi({\bm q}^*).
\]
This contradicts the choice of ${\bm q}^*$ as a maximizer of $\Psi$ over
$\mathcal M_\tau$. Therefore no two interior coordinates of
${\bm q}^*$ can be unequal. Hence all interior coordinates of the selected
minimizer are equal.
\end{proof}
 
The next lemma rules out lower-boundary coordinates. Its proof is deferred to
Appendix~\ref{app:boundary}.

\begin{lemma}\label{lem:no-zero}
Let $\tau\in(0,1)$, and let a minimizer of $J$ over $\mathcal L_\tau$ have
the form
\[
    \left(
    \underbrace{1,\ldots,1}_{h},
    \underbrace{p,\ldots,p}_{r},
    \underbrace{0,\ldots,0}_{g}
    \right),
    \qquad 0<p<1.
\]
Then $g=0$.
\end{lemma}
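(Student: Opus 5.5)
The plan is a one-sided first-order perturbation. Starting from the minimizer $\bm q$ with $g\ge1$, raise one zero coordinate, say $q_0$, from $0$ to $\varepsilon>0$. To stay on $\mathcal L_\tau$, lower the $r$ interior coordinates simultaneously, each by the same amount $\delta=\delta(\varepsilon)$. Only $q_0$ sits at the boundary and it moves inward, so this path is feasible for small $\varepsilon$. I will show that $J$ has strictly negative one-sided derivative along the path, contradicting minimality. Hence $g=0$.

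\textbf{Gradients at the minimizer.} Set $k=m-1-h$ and, as in the proof of Lemma~\ref{lem:level-equalization}, note $0\le k\le r-1$, since $h<m\le h+r$. Write
\[
\beta=\Pp(\operatorname{Bin}(r,p)=k), \qquad \gamma=\Pp(\operatorname{Bin}(r-1,p)=k).
\]
For the zero coordinate:
\begin{itemize}
\item $\partial S/\partial q_0=\Pp(N_{-0}=m-1)=\beta$;
\item $\partial d/\partial q_0=-1$;
\item $\partial U/\partial q_0=\Pp(N_{-0}\ge m)=\tau$, because $B_0\equiv0$.
\end{itemize}
So, using $\rho=U+\tau d$ on $\mathcal L_\tau$ from Eq.~\eqref{eq:rho-stoploss1}, $\rho$ is stationary in $q_0$ to first order. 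For an interior coordinate $i$:
\begin{itemize}
\item $\partial S/\partial q_i=\gamma$;
\item $\partial d/\partial q_i=-1$;
\item $\partial U/\partial q_i-\tau=\Pp(N_{-i}\ge m)-S=-q_i\Pp(N_{-i}=m-1)=-p\gamma$.
\end{itemize}
Strictly, one should work with $\rho$ as a function of $\bm q$ via Eq.~\eqref{eq:rho_q}. The identity $\rho=U+dS$ holds everywhere, and the extra term $d\,\partial S$ vanishes along the level curve, so the computation is legitimate.

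\textbf{First-order change of $J$.} Keeping $S$ fixed forces $r\gamma\delta=\beta\varepsilon$ to first order. Then
\[
d'=-\varepsilon+r\delta, \qquad \rho'=rp\gamma\delta=p\beta\varepsilon .
\]
The sign of the derivative of $J=d/\rho$ is the sign of $d'\rho-d\rho'$. Using $\rho=rp(1-p)\gamma$ and $d=m-h-rp=k+1-rp$, the derivative is strictly negative exactly when
\[
(\beta-\gamma)\,r(1-p)<(k+1-rp)\,\beta .
\]
This is the key scalar inequality.

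\textbf{Proving the scalar inequality.} I expect this to be the main obstacle. I would use the binomial relations
\[
\beta=p\gamma_{-}+(1-p)\gamma, \quad \gamma_{-}=\Pp(\operatorname{Bin}(r-1,p)=k-1), \qquad \gamma=\frac{r-k}{r(1-p)}\,\beta .
\]
The second relation follows from $\binom{r-1}{k}=\frac{r-k}{r}\binom{r}{k}$. Substituting it gives
\[
(\beta-\gamma)r(1-p)=\beta\bigl(r(1-p)-(r-k)\bigr)=\beta(k-rp).
\]
The inequality therefore reduces to $k-rp<k+1-rp$, which is trivially true, and $\beta>0$ because $0\le k\le r-1$.

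If this algebra holds, the derivative of $J$ along the feasible path equals $-\beta\rho\varepsilon/\rho^2<0$ at first order. That contradicts minimality, so no zero coordinate can be present. The remaining points to check carefully are the signs of the derivatives of $U$, and the fact that the implicit-function choice of $\delta$ keeps $p-\delta\in(0,1)$.
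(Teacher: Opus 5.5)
Your proposal is correct and follows the paper's argument. Both raise one zero coordinate to $\varepsilon$, shift the common interior probability to stay on $\mathcal L_\tau$, and show that the one-sided derivative of $J$ is negative. Your use of $\rho=U+\tau d$ on the level set is just a quicker route to the paper's quantities $\dot p=-(1-p)/(r-k)$ and $\dot\rho=p\beta$, which the paper gets by differentiating an explicit formula for $\rho$.

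There is one small slip in your closing expression. The derivative of $J$ is $-p\beta\varepsilon/\rho^2$, which equals the paper's $-\varepsilon/(\ell\rho)$ with $\ell=r-k$, not $-\beta\rho\varepsilon/\rho^2$. Only the sign matters, so the conclusion stands.
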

 
It remains to rule out upper-boundary coordinates. Its proof, based on a
tail-preserving boundary perturbation, is deferred to
Appendix~\ref{app:boundary}.

\begin{lemma}\label{lem:no-one}
Let $\tau\in(0,1)$, and let
\[
    \mathcal M_\tau
    =
    \operatorname*{arg\,min}_{{\bm q}\in\mathcal L_\tau}
    J({\bm q}).
\]
Among all vectors in $\mathcal M_\tau$, select a minimizer
${\bm q}^*$ having the largest possible number of interior coordinates.
By Lemma~\ref{lem:level-equalization}, ${\bm q}^*$ may be chosen so that
all of its interior coordinates are equal. Thus, after a permutation of
coordinates, it has the form
\[
    {\bm q}^*
    =
    \left(
    \underbrace{1,\ldots,1}_{h},
    \underbrace{p,\ldots,p}_{r},
    \underbrace{0,\ldots,0}_{g}
    \right),
    \qquad 0<p<1.
\]
If $g=0$, then $h=0$.
\end{lemma}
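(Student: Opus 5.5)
The plan is to argue by contradiction. Suppose $g=0$ and $h\ge1$, so ${\bm q}^*=(1,\ldots,1,p,\ldots,p)$ with $h\ge1$ ones and $r=n-h$ interior coordinates. Since $\tau\in(0,1)$, the proof of Lemma~\ref{lem:level-equalization} gives $h<m\le h+r$. The idea is a one-sided, tail-preserving perturbation. Lower one coordinate from $1$ to $1-\varepsilon$, and compensate by raising the common interior value from $p$ to $p+\delta(\varepsilon)$ so that $S$ stays equal to $\tau$. The perturbed vector has $r+1$ interior coordinates. It therefore cannot lie in $\mathcal M_\tau$, by the maximality of the number of interior coordinates of ${\bm q}^*$. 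Since it lies in $\mathcal L_\tau$, we get $J({\bm q}_\varepsilon)>J({\bm q}^*)$ for all small $\varepsilon>0$. Hence the one-sided derivative $\frac{d}{d\varepsilon}J({\bm q}_\varepsilon)|_{\varepsilon=0^+}\ge0$, and I will show that this derivative is in fact strictly negative (or is zero with a strictly negative second-order term), which gives the contradiction.

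\textbf{Step 1: reduce to a homogeneous binomial problem.} With $h$ ones fixed, $N=h+M$ where $M\sim\operatorname{Bin}(r,p)$. So
\[
S=\Pp(M\ge m'),\qquad U=\Ee(M-m')^+,\qquad d=m'-rp,\qquad m'=m-h\in\{1,\ldots,r\}.
\]
Next I compute the first-order effect of the perturbation using the partial derivatives from the proof of Theorem~\ref{thm:exp}. Lowering coordinate $1$ changes $S$ at rate $-\Pp(N_{-1}=m-1)=-\Pp(M=m'-1)$. Raising all interior coordinates changes $S$ at rate $r\Pp(M_{r-1}=m'-1)$, where $M_{r-1}\sim\operatorname{Bin}(r-1,p)$. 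This determines $\delta'(0)$.

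\textbf{Step 2: first-order change of $d$ and $\rho$.} By \eqref{eq:rho-stoploss1}, on the level set $\rho=U+\tau d$. The derivative of $U$ in coordinate $i$ is $\Pp(N_{-i}\ge m)$. For the coordinate at one this equals $\Pp(M\ge m')=\tau$, and for an interior coordinate it equals $\Pp(M_{r-1}\ge m'-1)$. So $d'$ and $\rho'$ are explicit combinations of binomial probabilities of $\operatorname{Bin}(r,p)$, $\operatorname{Bin}(r-1,p)$ and $\delta'(0)$. The sign condition $J'=(d'\rho-d\rho')/\rho^2<0$ then becomes an inequality among $m'$, $r$, $p$ and binomial point and tail probabilities.

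\textbf{Step 3: the main obstacle.} The hard part is proving this binomial inequality for every $1\le m'\le r$ and $p\in(0,1)$. My first attempt is to use $\Pp(M_{r-1}=m'-1)=\frac{m'}{rp}\Pp(M=m')$ and the identity \eqref{eq:bernoulli-identity} to express everything through $\Pp(M=m')$, $\tau$ and $U$. The sign condition should then collapse to a statement like $U>0$, or to a stop-loss inequality $\Ee(M-m')^+\ge c\,(\ldots)$ that follows from $d\rho>0$ and positivity of $\Pp(M=m'-1)$.

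If the first-order term vanishes identically, which is plausible since the single-pair perturbations in Lemma~\ref{lem:level-equalization} left $J$ invariant, I will instead use the second-order expansion. Failing that, I will change the perturbation: raise $p$ only on a single interior coordinate, then re-equalize via Lemma~\ref{lem:level-equalization}. The contradiction then comes from the selection of ${\bm q}^*$ with maximal interior count, exactly as in the $\Psi$-argument, rather than from a strict decrease of $J$.

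Once $h=0$ is established, the minimizer is homogeneous.
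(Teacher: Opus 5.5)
Your skeleton matches the paper's: lower one coordinate from $1$ to $1-\varepsilon$ and raise the common interior value to $p_\varepsilon=p+\delta(\varepsilon)$ to stay on $\mathcal L_\tau$. Your opening observation is correct, and tidier than the paper's case split. By the maximal-interior-count selection, ${\bm q}_\varepsilon\notin\mathcal M_\tau$, so $J({\bm q}_\varepsilon)>J({\bm q}^*)$, and it suffices to show $J({\bm q}_\varepsilon)\le J({\bm q}^*)$ for small $\varepsilon>0$.

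The gap is that you never show this, and your roadmap would not get there. In Steps 1--2 your rates for the coordinate at one are off by one. Removing that coordinate leaves only $h-1$ sure successes, so $N_{-1}=h-1+M$ and
\[
\frac{\partial S}{\partial q_1}=\Pp(M=m'),\qquad
\frac{\partial U}{\partial q_1}=\Pp(M\ge m'+1)=\tau-\Pp(M=m'),
\]
not $\Pp(M=m'-1)$ and $\tau$. The values you wrote are the rates at a \emph{zero} coordinate. With them, Step 2 reproduces the negative of the first-order computation in Lemma~\ref{lem:no-zero} and returns $J'(0)=1/\{(r-m'+1)\rho\}>0$, which contradicts nothing. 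With the correct rates, $\delta'(0)=p/m'$, $d'(0)=d/m'$ and $\rho'(0)=(1-p)\Pp(M=m')=\rho/m'$. Hence $J'(0)=0$ identically in $(m',r,p)$. There is no first-order binomial inequality to prove, and no reduction to $U>0$ or to a stop-loss bound is possible.

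All of the content is at second order, which your proposal leaves uncomputed. Differentiating $\Pp(\operatorname{Bin}(r,p_\varepsilon)\ge m')-\varepsilon\,\Pp(\operatorname{Bin}(r,p_\varepsilon)=m')=\tau$ twice gives $\ddot p(0)$. Expanding $\rho$ along the path gives $\ddot\rho(0)/\rho=-(r+1)p/\{(m')^2(1-p)\}$. Then
\[
\left.\frac{d^2}{d\varepsilon^2}J({\bm q}_\varepsilon)\right|_{\varepsilon=0}
=-\frac{p\,(r-m')}{(m')^2(1-p)\,\rho},
\]
which is negative only when $m'<r$.

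When $m'=r$ this vanishes, and no Taylor argument of any order helps, because $J$ is exactly constant on the path. Indeed $\tau=(1-\varepsilon)p_\varepsilon^{\,r}$ and $\rho({\bm q}_\varepsilon)=\tau\,d({\bm q}_\varepsilon)$ with $d({\bm q}_\varepsilon)=\varepsilon+r(1-p_\varepsilon)$, so $J\equiv1/\tau$. Your stated target (``zero first derivative with a strictly negative second-order term'') therefore fails in this case. What closes it is this exact identity combined with your strict inequality: ${\bm q}_\varepsilon$ would be a minimizer with $r+1$ interior coordinates.

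Your final fallback does not replace either computation. The pairwise equalization in Lemma~\ref{lem:level-equalization} preserves $J$ only because the starting point is already a minimizer, which a perturbed vector need not be.
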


We now combine the three auxiliary lemmas to complete the scalar comparison.
They show that, on every fixed tail level, the minimum of $J$ is attained at
a homogeneous Bernoulli vector.

\begin{lemma}\label{lem:scalar}
Let $z$ be determined by Eq.~\eqref{eq:calibration}, and let $b_m$ be
defined in Eq.~\eqref{eq:bm}. Then
\begin{equation}\label{eq:scalar}
        \frac{d({\bm q})}{\rho({\bm q})}
        \ge
        \frac{m-nz}{m b_m(1-z)}.
\end{equation}
\end{lemma}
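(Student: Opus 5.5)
The argument combines the three auxiliary lemmas and then evaluates $J$ at a homogeneous vector. Set $\tau=S({\bm q})=D_m(z)\in(0,1)$. Then ${\bm q}\in\mathcal L_\tau$, so $J({\bm q})\ge\min_{\mathcal L_\tau}J$, and it suffices to identify this minimum.

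\emph{Step 1: locate a homogeneous minimizer.} Take ${\bm q}^*$ as in Lemma~\ref{lem:no-one}: a minimizer with the maximal number of interior coordinates, chosen by Lemma~\ref{lem:level-equalization} so that its interior coordinates are equal. The equalization in that lemma preserves the number of interior coordinates, so both selections can be made simultaneously. Lemma~\ref{lem:no-zero} gives $g=0$, and then Lemma~\ref{lem:no-one} gives $h=0$. Hence ${\bm q}^*=(p,\ldots,p)$ with $0<p<1$. Since $S({\bm q}^*)=D_m(p)=\tau=D_m(z)$ and $D_m$ is strictly increasing, $p=z$.

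\emph{Step 2: evaluate $J$ at the homogeneous vector.} For ${\bm q}=(z,\ldots,z)$ we have $d=m-nz$, and
\[
\rho=\sum_{i=1}^n z(1-z)\binom{n-1}{m-1}z^{m-1}(1-z)^{n-m}.
\]
Using $n\binom{n-1}{m-1}=m\binom{n}{m}$, this equals $m(1-z)b_m$. Therefore
\[
\min_{\mathcal L_\tau}J=\frac{m-nz}{m b_m(1-z)},
\]
and \eqref{eq:scalar} follows from $J({\bm q})\ge\min_{\mathcal L_\tau}J$.

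The only point needing care is consistency between the lemmas. Lemma~\ref{lem:no-zero} applies to any minimizer of the stated form, so it applies to ${\bm q}^*$. Lemma~\ref{lem:no-one} requires the maximal interior count, which the selection guarantees. The substantive difficulty lies in those two boundary lemmas, which are deferred to the appendix. Given them, the statement reduces to the computation in Step 2.
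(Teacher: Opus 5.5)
Your proposal is correct and follows the paper's proof essentially step for step. You fix the level $\tau=S({\bm q})=D_m(z)$, select a minimizer with maximal interior count and equal interior coordinates, use Lemmas~\ref{lem:no-zero} and~\ref{lem:no-one} to force $g=h=0$, identify $p=z$ by strict monotonicity of $D_m$, and evaluate $J$ at $(z,\ldots,z)$ via $n\binom{n-1}{m-1}=m\binom{n}{m}$. Your remark that the equalization preserves the number of interior coordinates, so that one vector satisfies both selection requirements, is exactly how the paper handles it by maximizing $\Psi$ over minimizers with $r_*$ interior coordinates.
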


\begin{proof}
Fix an arbitrary tail level $\tau\in(0,1)$, and let
\[
    \mathcal M_\tau
    =
    \operatorname*{arg\,min}_{{\bm q}\in\mathcal L_\tau}
    J({\bm q}).
\]
As in the proof of Lemma~\ref{lem:level-equalization}, choose a minimizer
${\bm q}^*\in\mathcal M_\tau$ having the largest possible number of
interior coordinates among all minimizers. Lemma~\ref{lem:level-equalization}
shows that ${\bm q}^*$ may be chosen so that all of its interior coordinates
are equal. Hence, after a permutation of coordinates,
\[
    {\bm q}^*
    =
    \left(
    \underbrace{1,\ldots,1}_{h},
    \underbrace{p_\tau,\ldots,p_\tau}_{r},
    \underbrace{0,\ldots,0}_{g}
    \right),
    \qquad 0<p_\tau<1.
\]

Lemma~\ref{lem:no-zero} implies that $g=0$. Since ${\bm q}^*$ was selected
to have the largest possible number of interior coordinates among all
minimizers, Lemma~\ref{lem:no-one} then implies that $h=0$. Therefore
\[
        {\bm q}^*
        =
        {\bm q}_\tau
        =
        (p_\tau,\ldots,p_\tau),
        \qquad 0<p_\tau<1.
\]
Hence the minimum of $J$ over $\mathcal L_\tau$ is attained at a
homogeneous Bernoulli vector.

Since ${\bm q}_\tau\in\mathcal L_\tau$,
\[
        D_m(p_\tau)=\tau.
\]
We next compute the value of $J$ at this homogeneous vector. Since
\[
        d({\bm q}_\tau)=m-np_\tau,
\]
and each $N_{-i}$ has distribution
$\operatorname{Bin}(n-1,p_\tau)$,
\[
\begin{aligned}
\rho({\bm q}_\tau)
&=
np_\tau(1-p_\tau)
\Pp\left(
\operatorname{Bin}(n-1,p_\tau)=m-1
\right)\\
&=
np_\tau(1-p_\tau)
\binom{n-1}{m-1}
p_\tau^{m-1}(1-p_\tau)^{n-m}\\
&=
m\binom{n}{m}
p_\tau^m(1-p_\tau)^{n-m+1}.
\end{aligned}
\]
Therefore
\[
        J({\bm q}_\tau)
        =
        \frac{m-np_\tau}
        {m\binom{n}{m}
        p_\tau^m(1-p_\tau)^{n-m+1}}.
\]

Now take
\[
        \tau
        =
        S({\bm q})
        =
        \Pp_{\bm q}(N\ge m).
\]
By the calibration equation,
\[
        S({\bm q})=D_m(z).
\]
Since
\[
        D_m(p_\tau)=\tau=D_m(z)
\]
and $D_m$ is strictly increasing on $(0,1)$, we have
\[
        p_\tau=z.
\]
Because ${\bm q}_\tau$ minimizes $J$ over $\mathcal L_\tau$ and
${\bm q}\in\mathcal L_\tau$,
\[
        J({\bm q})
        \ge
        J({\bm q}_\tau).
\]
Using $p_\tau=z$ and
\[
        b_m
        =
        \binom{n}{m}z^m(1-z)^{n-m},
\]
we obtain
\[
\begin{aligned}
        \frac{d({\bm q})}{\rho({\bm q})}
       =
        J({\bm q})
        &\ge
        J({\bm q}_\tau)\\
        &=
        \frac{m-nz}
        {m\binom{n}{m}z^m(1-z)^{n-m+1}}\\
        &=
        \frac{m-nz}{m b_m(1-z)}.
\end{aligned}
\]
This proves Eq.~\eqref{eq:scalar}.
\end{proof}

We are now ready to prove Proposition~\ref{prop:boundary}.

\begin{proof}[Proof of Proposition~\ref{prop:boundary}]
By Lemma~\ref{prop:first-psd},
\[
    A-\frac{d({\bm q})}{\rho({\bm q})}
    \bc\bc^{\mathsf T}
    \succeq0.
\]
By Lemma~\ref{lem:scalar},
\[
    \frac{d({\bm q})}{\rho({\bm q})}
    -
    \frac{m-nz}{m b_m(1-z)}
    \ge0.
\]
Since
\(
        \bc\bc^{\mathsf T}\succeq0,
\)
it follows that
\[
\begin{aligned}
&A-\frac{m-nz}{m b_m(1-z)}
\bc\bc^{\mathsf T}=
\left(
A-\frac{d({\bm q})}{\rho({\bm q})}
\bc\bc^{\mathsf T}
\right)
+
\left(
\frac{d({\bm q})}{\rho({\bm q})}
-
\frac{m-nz}{m b_m(1-z)}
\right)
\bc\bc^{\mathsf T}
\succeq0.
\end{aligned}
\]
Hence, Proposition~\ref{prop:boundary} follows.
\end{proof}

\section{Proportional-hazards family}\label{sec:extensions}

We now extend the exponential comparison to the proportional-hazards family
defined in Eq.~\eqref{eq:family-intro}. Define
\begin{equation}\label{eq:H}
H_{\alpha,\xi}(x)
=
\begin{cases}
\displaystyle
\frac{1}{\xi}
\log\left[1+\xi(x/\sigma)^\alpha\right],
& \xi>0,\\[6pt]
\displaystyle
(x/\sigma)^\alpha,
& \xi=0.
\end{cases}
\end{equation}
Then Eq.~\eqref{eq:family-intro} can be written as
\[
        \overline F_{\lambda,\sigma,\alpha,\xi}(x)
        =
        \exp\left(-\lambda H_{\alpha,\xi}(x)\right).
\]
Consequently, if
\(
        X\sim F_{\lambda,\sigma,\alpha,\xi},
\)
then
\begin{equation}\label{eq:hazardtransform}
        H_{\alpha,\xi}(X)
        \sim
        \operatorname{Exp}(\lambda).
\end{equation}
Thus the increasing transformation \(H_{\alpha,\xi}\) maps the family in
Eq.~\eqref{eq:family-intro} to the exponential family. This allows
Theorem~\ref{thm:exp} to serve as the starting point for the extension
below. Note that, for \(\xi>0\), the survival function is regularly varying with index
\(
        -\alpha\lambda/\xi.
\)

For \(1\le k\le n\), define
\begin{equation}\label{eq:lambdak}
\lambda_{[k]}
=
\left[
\frac{e_k(\lambda_1,\ldots,\lambda_n)}
     {\binom{n}{k}}
\right]^{1/k},
\end{equation}
where \(e_k\) denotes the \(k\)th elementary symmetric polynomial.

\begin{theorem}\label{thm:unified}
Let \(X_1,\ldots,X_n\) be independent random variables from the family in
Eq.~\eqref{eq:family-intro}, with common parameters
\((\sigma,\alpha,\xi)\), heterogeneous rates
\(\lambda_1,\ldots,\lambda_n\), and \(0<\alpha\le1\). Let
\(Y_1,\ldots,Y_n\) be i.i.d. from the same family with common rate
\(\lambda_0>0\). If \(\xi=0\), then for every \(\lambda_0>0\),
\[
        Y_{k:n}\le_c X_{k:n}.
\]
If \(\xi>0\) and
\begin{equation}\label{eq:ratecondition}
        \lambda_0\ge\lambda_{[k]},
\end{equation}
then
\[
        Y_{k:n}\le_c X_{k:n}.
\]
\end{theorem}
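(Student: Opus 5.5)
Write \(\psi=H_{\alpha,\xi}^{-1}\) and reduce everything to Theorem~\ref{thm:exp} via Eq.~\eqref{eq:hazardtransform}. Put \(E_i=H_{\alpha,\xi}(X_i)\sim\operatorname{Exp}(\lambda_i)\) and \(E_i'=H_{\alpha,\xi}(Y_i)\sim\operatorname{Exp}(\lambda_0)\). Since \(\psi\) is increasing, \(X_{k:n}=\psi(E_{k:n})\) and \(Y_{k:n}=\psi(E'_{k:n})\).

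Let \(\varphi=F_{E_{k:n}}^{-1}\circ F_{E'_{k:n}}\). By Theorem~\ref{thm:exp}, \(\varphi\) is convex on \([0,\infty)\), and \(\varphi(0)=0\) because both supports start at \(0\). The quantile map we must show convex is
\[
  g=F_{X_{k:n}}^{-1}\circ F_{Y_{k:n}}=\psi\circ\varphi\circ\psi^{-1}.
\]

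\emph{Step 1: the Weibull case \(\xi=0\).} Here \(\psi^{-1}(x)=(x/\sigma)^\alpha\), so \(g(x)=\sigma\,\varphi(s)^{1/\alpha}\) with \(s=(x/\sigma)^\alpha\). The chain rule, together with \(x^{\alpha-1}\propto s^{1-1/\alpha}\), gives
\[
  g'(x)=\Bigl(\frac{\varphi(s)}{s}\Bigr)^{1/\alpha-1}\varphi'(s).
\]
Because \(\varphi\) is convex with \(\varphi(0)=0\), both \(\varphi(s)/s\) and \(\varphi'(s)\) are nondecreasing and nonnegative. Since \(1/\alpha-1\ge0\) when \(\alpha\le1\), \(g'\) is nondecreasing in \(s\), hence in \(x\). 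So \(g\) is convex for every \(\lambda_0\). This yields a reusable lemma: conjugating a convex map \(\chi\) with \(\chi(0)=0\) by a power map \(u\mapsto\sigma u^{1/\alpha}\) with \(\alpha\le1\) preserves convexity.

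\emph{Step 2: the case \(\xi>0\).} Factor \(\psi=p\circ\kappa\) with \(\kappa(t)=(e^{\xi t}-1)/\xi\) and \(p(u)=\sigma u^{1/\alpha}\). Then
\[
  g=p\circ\chi\circ p^{-1},\qquad \chi=\kappa\circ\varphi\circ\kappa^{-1}.
\]
By the Step 1 lemma, it suffices to show \(\chi\) is convex; note \(\chi(0)=0\). With \(u=\kappa(t)\),
\[
  \chi'(u)=\varphi'(t)\,e^{\xi(\varphi(t)-t)}.
\]
The factor \(\varphi'(t)\) is positive and nondecreasing. So \(\chi'\) is nondecreasing as soon as \(\varphi(t)-t\) is nondecreasing, i.e.\ \(\varphi'\ge1\). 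By convexity, this reduces to the single condition \(\varphi'(0+)\ge1\).

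\emph{Step 3: computing \(\varphi'(0+)\), which I expect to be the main obstacle.} For a convex \(\varphi\) with \(\varphi(0)=0\), \(\varphi'(0+)=\lim_{t\downarrow0}\varphi(t)/t\), so it suffices to compute this limit. As \(t\to0\) one expands
\[
  F_{E'_{k:n}}(t)\sim\tbinom{n}{k}\lambda_0^k t^k,\qquad
  F_{E_{k:n}}(t)\sim e_k(\lambda_1,\ldots,\lambda_n)\,t^k.
\]
The dominant event is that exactly \(k\) components fall below \(t\), each with probability \(\sim\lambda_i t\). Inverting these asymptotics gives
\[
  \varphi(t)\sim t\,\lambda_0/\lambda_{[k]}.
\]
Some care is needed to make the inversion of the asymptotics rigorous, using monotonicity and the regular \(t^k\) behaviour of both distribution functions. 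This gives \(\varphi'(0+)=\lambda_0/\lambda_{[k]}\), which is at least \(1\) exactly under condition~\eqref{eq:ratecondition}. Hence \(\chi\) is convex, so \(g\) is convex and \(Y_{k:n}\le_c X_{k:n}\).
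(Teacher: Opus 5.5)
Your proof is correct and follows the paper's route. You use the same reduction through $H_{\alpha,\xi}$, the same factorization of $H_{\alpha,\xi}^{-1}$ into $t\mapsto(e^{\xi t}-1)/\xi$ followed by a power map, and the same slope-at-the-origin computation giving the threshold $\lambda_0\ge\lambda_{[k]}$. The only differences are that you work with the convex inverse map $\varphi=T^{-1}$ and check convexity by showing the first derivative is a product of nonnegative nondecreasing factors, whereas the paper checks the sign of explicit second derivatives of the concave map $T$; your condition $\varphi'(0+)\ge1$ is the paper's $T'(0+)\le1$.
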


\begin{proof}
Denote
\[
        E_i=H_{\alpha,\xi}(X_i),
        \qquad
        E_i^0=H_{\alpha,\xi}(Y_i).
\]
By Eq.~\eqref{eq:hazardtransform}, the \(E_i\)'s are independent
exponential random variables with rates
\(\lambda_1,\ldots,\lambda_n\), while the \(E_i^0\)'s are i.i.d.
exponential random variables with rate \(\lambda_0\). Define
\[
        T(s)
        =
        F_{E_{k:n}^0}^{-1}
        \left(
        F_{E_{k:n}}(s)
        \right).
\]
By Theorem~\ref{thm:exp}, \(T\) is increasing and concave, with
\(
        T(0)=0.
\)
We first derive the behavior of \(T\) at the origin. As \(s\downarrow0\),
\[
        F_{E_{k:n}}(s)
        \sim
        e_k(\lambda_1,\ldots,\lambda_n)s^k,
\]
where \(\sim\) means asymptotic equivalence, and  as \(t\downarrow0\),
\[
        F_{E_{k:n}^0}(t)
        \sim
        \binom{n}{k}(\lambda_0t)^k.
\]
It follows that
\begin{equation}\label{eq:Tprime0}
        T'(0+)
        =
        \frac{\lambda_{[k]}}{\lambda_0}.
\end{equation}
When \(\xi>0\), condition \eqref{eq:ratecondition} implies
\(
        T'(0+)\le1.
\)
Since \(T\) is increasing and concave,
\(
        0\le T'(s)\le1\),
for \(s>0.
\)

We next establish a simple power-preservation argument that will be used in
both $\xi=0$ and $\xi>0$. Let \(R\) be increasing and concave on \([0,\infty)\), with
\(R(0)=0\), and set
\[
        p=\frac{1}{\alpha}\ge1.
\]
Define
\[
        V(x)
        =
        \left[
        R\left(x^{1/p}\right)
        \right]^p.
\]
For \(x=y^p\), write
\[
        \eta(y)
        =
        \frac{yR'(y)}{R(y)}.
\]
Since \(R\) is increasing and concave with \(R(0)=0\),
\(
        0\le\eta(y)\le1.
\)
A direct calculation gives
\[
\frac{x^2V''(x)}{V(x)}
=
\frac{1}{p}
\frac{y^2R''(y)}{R(y)}
+
\left(
1-\frac{1}{p}
\right)
\left(
\eta(y)^2-\eta(y)
\right)
\le0.
\]
Hence \(V\) is concave.

Suppose first that \(\xi=0\). By Eq.~\eqref{eq:H},
\[
        H_{\alpha,0}(x)
        =
        (x/\sigma)^\alpha,
        \qquad
        H_{\alpha,0}^{-1}(s)
        =
        \sigma s^{1/\alpha}.
\]
Taking \(R=T\) in the preceding argument gives
\[
\begin{aligned}
F_{Y_{k:n}}^{-1}
\left(
F_{X_{k:n}}(x)
\right)=
H_{\alpha,0}^{-1}
\left(
T\left(
H_{\alpha,0}(x)
\right)
\right)=
\sigma
\left[
T\left(
(x/\sigma)^\alpha
\right)
\right]^{1/\alpha},
\end{aligned}
\]
which is concave. Therefore
\[
        Y_{k:n}\le_c X_{k:n}.
\]
Now suppose that \(\xi>0\). Define
\[
        g_\xi(s)
        =
        \frac{e^{\xi s}-1}{\xi}
\]
and
\[
        R(x)
        =
        g_\xi
        \left(
        T\left(
        g_\xi^{-1}(x)
        \right)
        \right).
\]
We show that \(R\) is concave. If
\(
        x=g_\xi(s),
\)
then direct differentiation gives
\[
R''(x)
=
e^{\xi(T(s)-2s)}
\left[
T''(s)
+
\xi T'(s)
\left(
T'(s)-1
\right)
\right].
\]
Because \(T\) is concave and
\(
        0\le T'(s)\le1,
\)
we have
\(
        R''(x)\le0.
\)
Thus \(R\) is increasing and concave, with \(R(0)=0\).

Moreover,
\[
        H_{\alpha,\xi}^{-1}(s)
        =
        \sigma
        \left[
        g_\xi(s)
        \right]^{1/\alpha}.
\]
Since
\[
        g_\xi
        \left(
        H_{\alpha,\xi}(x)
        \right)
        =
        (x/\sigma)^\alpha,
\]
we obtain
\[
\begin{aligned}
F_{Y_{k:n}}^{-1}
\left(
F_{X_{k:n}}(x)
\right)
=
H_{\alpha,\xi}^{-1}
\left(
T\left(
H_{\alpha,\xi}(x)
\right)
\right)=
\sigma
\left[
R\left(
(x/\sigma)^\alpha
\right)
\right]^{1/\alpha}.
\end{aligned}
\]
By the power-preservation argument above, this function is concave.
Therefore
\[
        Y_{k:n}\le_c X_{k:n}.
\]
\end{proof}

By Maclaurin's inequalities
\cite{MarshallOlkinArnold2011},
\[
        \lambda_{[k]}
        \le
        \frac{1}{n}\sum_{i=1}^n\lambda_i.
\]
Hence the arithmetic mean rate provides a single i.i.d. benchmark that works
  for all thresholds.

\begin{corollary}\label{cor:meanrate}
Let
\(
        \bar\lambda
        =
        \frac{1}{n}\sum_{i=1}^n\lambda_i.
\)
Under the assumptions of Theorem~\ref{thm:unified}, let
\(
        \lambda_0=\bar\lambda.
\)
Then
\[
        Y_{k:n}\le_c X_{k:n},
        \qquad k=1,\ldots,n.
\]
\end{corollary}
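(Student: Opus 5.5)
The corollary follows from Theorem~\ref{thm:unified} once we check that the choice \(\lambda_0=\bar\lambda\) satisfies its hypotheses for every \(k=1,\ldots,n\) at once. We split on \(\xi\).

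If \(\xi=0\), Theorem~\ref{thm:unified} gives \(Y_{k:n}\le_c X_{k:n}\) for every \(\lambda_0>0\) and every \(k\). Since \(\bar\lambda>0\) because all \(\lambda_i>0\), this case needs nothing more.

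If \(\xi>0\), it suffices to verify the rate condition \eqref{eq:ratecondition}, namely \(\bar\lambda\ge\lambda_{[k]}\) for each \(k\). This is exactly Maclaurin's inequality for the positive vector \((\lambda_1,\ldots,\lambda_n)\). It states that the normalized symmetric means \(\lambda_{[k]}=[e_k/\binom{n}{k}]^{1/k}\) are nonincreasing in \(k\), with \(\lambda_{[1]}=e_1/n=\bar\lambda\). Hence \(\lambda_{[k]}\le\lambda_{[1]}=\bar\lambda\) for all \(1\le k\le n\). Applying Theorem~\ref{thm:unified} with \(\lambda_0=\bar\lambda\) then yields the ordering for every \(k\).

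None of these steps is a real obstacle. The only point to be careful about is that \(\lambda_{[1]}\) coincides with the arithmetic mean, so the \(k=1\) case holds with equality and Maclaurin's chain covers the remaining thresholds. If a self-contained argument were wanted, Maclaurin's inequality can itself be derived from Newton's inequalities \(E_k^2\ge E_{k-1}E_{k+1}\), where \(E_k=e_k/\binom{n}{k}\). One iterates \(E_k^{1/k}\ge E_{k+1}^{1/(k+1)}\) by a standard log-concavity induction. We would simply cite \cite{MarshallOlkinArnold2011}, as the paper does.
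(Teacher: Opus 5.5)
Your proposal is correct and matches the paper's argument. For \(\xi=0\) it applies Theorem~\ref{thm:unified} directly, and for \(\xi>0\) it obtains the rate condition \(\bar\lambda\ge\lambda_{[k]}\) for every \(k\) from Maclaurin's inequalities with \(\lambda_{[1]}=\bar\lambda\), exactly as the paper notes just before the corollary.
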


For the Weibull case \(\xi=0\), Theorem~\ref{thm:unified} shows
the comparison holds for every common positive rate \(\lambda_0\).
One may wonder if the restriction \(0<\alpha\le1\) can be dropped.
The next result  shows that  when \(\alpha>1\),  the required convex
transform ordering fails for the Weibull family.

\begin{theorem}\label{thm:alpha}
Let \(n\ge2\), \(2\le k\le n\), and \(\alpha>1\). Suppose that
\(X_1,\ldots,X_n\) are independent and satisfy
\[
        \Pp(X_i>x)
        =
        \exp\left(-\lambda_i x^\alpha\right),
        \qquad i=1,\ldots,n,
\]
where the positive rates \(\lambda_1,\ldots,\lambda_n\) are not all equal.
Let \(Y_1,\ldots,Y_n\) be i.i.d. with
\[
        \Pp(Y_i>x)
        =
        \exp\left(-\lambda_0x^\alpha\right)
\]
for any \(\lambda_0>0\). Then
\[
        Y_{k:n}\not\le_c X_{k:n}.
\]
\end{theorem}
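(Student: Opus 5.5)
The plan is to reduce to the exponential quantile map and show that it violates concavity in the upper tail once it is raised to the power $1/\alpha$ with $\alpha>1$. As in the proof of Theorem~\ref{thm:unified}, put $E_i=X_i^\alpha\sim\operatorname{Exp}(\lambda_i)$ and $E_i^0=Y_i^\alpha\sim\operatorname{Exp}(\lambda_0)$. Let $T(s)=F_{E^0_{k:n}}^{-1}(F_{E_{k:n}}(s))$. Then
\[
\varphi(x)=F_{Y_{k:n}}^{-1}\bigl(F_{X_{k:n}}(x)\bigr)=\bigl[T(x^\alpha)\bigr]^{1/\alpha}.
\]
It suffices to show that $\varphi''(x)>0$ for all sufficiently large $x$. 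With $y=x^\alpha$, the power computation in the proof of Theorem~\ref{thm:unified}, now with $p=1/\alpha<1$, gives
\[
\frac{x^2\varphi''(x)}{\varphi(x)}=\alpha\,\frac{y^2T''(y)}{T(y)}+(1-\alpha)\bigl(\eta(y)^2-\eta(y)\bigr),\qquad \eta(y)=\frac{yT'(y)}{T(y)} .
\]
Here $1-\alpha<0$. So the target is to show that, as $y\to\infty$, the quantity $\eta(y)^2-\eta(y)$ is negative of exact order $1/y$, while $y^2T''(y)/T(y)$ is exponentially small.

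The key step is the exact linear-plus-intercept asymptotics of $T$. Write $m=n-k+1$ and note $1\le m\le n-1$ because $k\ge2$. For large $s$,
\[
\overline F_{E_{k:n}}(s)=\Pp(\text{at least } m \text{ of the } E_i \text{ exceed } s)=\sum_{|J|\ge m}\prod_{i\in J}e^{-\lambda_i s}\prod_{i\notin J}\bigl(1-e^{-\lambda_i s}\bigr).
\]
Let $\Lambda$ be the sum of the $m$ smallest rates and let $C_H$ be the number of $m$-subsets whose rate sum equals $\Lambda$. Then
\[
\overline F_{E_{k:n}}(s)=C_He^{-\Lambda s}\bigl(1+O(e^{-\gamma s})\bigr)
\]
for some $\gamma>0$. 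The homogeneous analogue is $\overline F_{E^0_{k:n}}(t)=\binom nm e^{-m\lambda_0t}(1+O(e^{-\lambda_0 t}))$. Setting these survival functions equal and taking logarithms gives
\[
T(s)=\beta s+\delta+O(e^{-\gamma' s}),\qquad \beta=\frac{\Lambda}{m\lambda_0},\qquad \delta=\frac{1}{m\lambda_0}\log\frac{\binom nm}{C_H}.
\]
These expansions are of exponential-polynomial type and are analytic for large $s$, so they can be differentiated termwise. Hence $T'(s)=\beta+O(e^{-\gamma's})$ and $T''(s)=O(e^{-\gamma's})$.

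The crucial point, and the only place where heterogeneity and $k\ge2$ enter, is that $\delta>0$. Because $1\le m<n$, all $\binom nm$ subsets of size $m$ have the same rate sum only if all the $\lambda_i$ are equal. Since the rates are not all equal, $C_H<\binom nm$ and therefore $\delta>0$. Note also that $\delta$ does not depend on the choice of $\lambda_0$, so the argument covers every $\lambda_0>0$ as the statement requires.

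With $\delta>0$ in hand,
\[
\eta(y)=\frac{\beta y+O(ye^{-\gamma'y})}{\beta y+\delta+o(1)}=1-\frac{\delta}{\beta y}+O(y^{-2}),
\]
so $\eta^2-\eta=-\delta/(\beta y)+O(y^{-2})$. Meanwhile $y^2T''(y)/T(y)=O(y\,e^{-\gamma'y})$. The displayed expression therefore equals $(\alpha-1)\delta/(\beta y)+O(y^{-2})>0$ for large $y$. Thus $\varphi$ is strictly convex on some interval $(x_0,\infty)$, so it is not concave, and $Y_{k:n}\not\le_c X_{k:n}$.

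I expect the main obstacle to be making the asymptotic step rigorous: one must justify that the derivatives of $T$ inherit the expansion, rather than just $T$ itself. I would handle this by writing $T$ explicitly as $T(s)=-\lambda_0^{-1}\log z(s)$ with $D_m(z(s))=S(s)$, as in the proof of Theorem~\ref{thm:exp}. Then I would invert the convergent expansion of $D_m$ near $z=0$, so that $\log z(s)$ becomes an explicit convergent series in exponentials of $s$.
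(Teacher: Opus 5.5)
Your proposal is correct and follows the paper's proof essentially step by step: the same reduction to $\varphi(x)=[T(x^\alpha)]^{1/\alpha}$, the same expansion $T(s)=\beta s+\delta+O(e^{-\gamma' s})$ with $\delta>0$ forced by $C_H<\binom{n}{m}$ when $m\le n-1$, and the same sign analysis of $\varphi''$. Your normalized quantity $x^2\varphi''(x)/\varphi(x)$ is just $y/T(y)^2$ times the paper's bracket $(\alpha-1)T'(y)\bigl(T(y)-yT'(y)\bigr)+\alpha yT(y)T''(y)$, and your plan to justify differentiating the expansion through $T=-\lambda_0^{-1}\log z(s)$ fills in a step the paper only asserts.
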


\begin{proof} Let
\[
        m=n-k+1,
        \qquad
        E_i=X_i^\alpha,
        \qquad
        E_i^0=Y_i^\alpha.
\]
Then \(E_i\) is exponential with rate \(\lambda_i\), while
\(E_i^0\) is exponential with rate \(\lambda_0\). Define
\[
        T(s)
        =
        F_{E_{k:n}^0}^{-1}
        \left(
        F_{E_{k:n}}(s)
        \right).
\]
Then, we have
\begin{equation}\label{eq:Psi}
        \Psi(x)
        =
        \left[
        T(x^\alpha)
        \right]^{1/\alpha}.
\end{equation}
Let
\[
        \beta
        =
        \min_{\substack{I\subset\{1,\ldots,n\}\\ |I|=m}}
        \sum_{i\in I}\lambda_i,
\]
and 
\[
        M
        =
        \#\left\{
        I\subset\{1,\ldots,n\}:
        |I|=m,\ 
        \sum_{i\in I}\lambda_i=\beta
        \right\}.
\]
An expansion of the Poisson--binomial tail gives, for some \(\delta>0\),
\[
        \Pp(E_{k:n}>s)
        =
        M e^{-\beta s}
        \left[
        1+O\left(e^{-\delta s}\right)
        \right].
\]
For the homogeneous order statistic,
\[
        \Pp(E_{k:n}^0>t)
        =
        \binom{n}{m}
        e^{-m\lambda_0t}
        \left[
        1+O\left(e^{-\lambda_0t}\right)
        \right].
\]
Consequently, for some \(\eta>0\),
\begin{equation}\label{eq:Tasymp}
\begin{aligned}
        T(s)
        &=
        as+c+O\left(e^{-\eta s}\right),\\
        T'(s)
        &=
        a+O\left(e^{-\eta s}\right),\\
        T''(s)
        &=
        O\left(e^{-\eta s}\right),
\end{aligned}
\end{equation}
where
\[
        a=\frac{\beta}{m\lambda_0},
        \qquad
        c=
        \frac{1}{m\lambda_0}
        \log\left(
        \frac{\binom{n}{m}}{M}
        \right).
\]
Since \(k\ge2\), we have \(m\le n-1\). If all \(m\)-subset sums of the
rates were equal, then all rates would be equal. Hence
\(
        M<\binom{n}{m},
\)
and therefore
\(
        c>0.
\)

Denoting
\(
        s=x^\alpha,
\)
differentiation of Eq.~\eqref{eq:Psi} gives
\begin{equation}\label{eq:Psi2}
\begin{split}
\Psi''(x)
=
x^{\alpha-2}
T(s)^{1/\alpha-2}
\left[
(\alpha-1)T'(s)
\left(
T(s)-sT'(s)
\right)
+
\alpha sT(s)T''(s)
\right].
\end{split}
\end{equation}
By Eq.~\eqref{eq:Tasymp},
\[
        T(s)-sT'(s)\longrightarrow c,
\]
while
\[
        T'(s)\longrightarrow a
\]
and
\[
        sT(s)T''(s)\longrightarrow0.
\]
Therefore the expression in brackets in Eq.~\eqref{eq:Psi2} converges to
\[
        (\alpha-1)ac>0.
\]
It follows that
\(
        \Psi''(x)>0
\)
for all sufficiently large \(x\). Hence,
\[
        Y_{k:n}\le_c X_{k:n}
\]
fails.
\end{proof}

\section{Conclusion and discussion}\label{sec:discussion}
This paper closes the gap between the convex-transform comparison known for the maximum of heterogeneous exponential samples and the star-order comparison previously available for general order statistics. The main result shows that the homogeneous exponential sample provides a common distributional-shape benchmark for every order statistic: for each threshold level, heterogeneity produces a more stretched upper-quantile shape, while the homogeneous order statistic is faster aging in the convex-transform sense. The result is therefore stronger than a comparison of means, variances, or pointwise tail probabilities; it gives a scale-free comparison of the entire quantile shape. Our proof also introduces a different mechanism from those used for the maximum or for the star order. The exponential order-statistic problem is reduced to an inequality for a Poisson--binomial tail, which is then resolved through a positive-semidefinite boundary matrix and a fixed-tail extremal argument. This structure may be useful for other comparison problems in which heterogeneous Bernoulli or threshold mechanisms appear.
The proportional-hazards extension shows that the same shape comparison persists for a broader class containing Weibull, Lomax, and Burr XII distributions under explicit conditions.  These conclusions have natural interpretations across several threshold-based applications. In a \(k\)-out-of-\(n\) reliability system, an appropriate order statistic determines the time at which the required number of components has failed or remains functioning. In a recovery architecture, \(X_{k:n}\) may represent the time until \(k\) recovery resources have been restored. In voting and distributed authorization systems, it may represent the time until the required number of responses has been received.   The distinction between heterogeneous and homogeneous component times therefore remains important even when their average system times are matched by rescaling. For insurance applications, this distinction is particularly relevant to interruption and recovery risk. Our result indicates that the upper tail of the restoration-time distribution, rather than its mean alone, affects stop-loss exposure and capital requirements.

Several directions remain open. The present results rely on independence and
on a common baseline family, with heterogeneity entering through the rate
parameters. Allowing dependence among component times would be particularly
important for reliability and recovery systems subject to common shocks or
shared operational environments. It is also natural to ask whether comparable convex-transform results continue to hold when components have different shape or tail parameters, or more generally, what conditions on a baseline cumulative hazard are sufficient to preserve the comparison. More broadly, the argument developed here exploits the threshold structure of order statistics. Determining which features of the comparison survive for more general coherent systems, or for other aggregation mechanisms such as convolutions, would further clarify how heterogeneity affects distributional shape beyond the i.i.d. benchmark.

\clearpage
\appendix
\section{Boundary perturbation arguments}\label{app:boundary}

This appendix gives the detailed proofs of Lemmas~\ref{lem:no-zero} and
\ref{lem:no-one}. We use the notation introduced in Section~\ref{sec:matrix}.

\subsection{Proof of Lemma~\ref{lem:no-zero}}

\begin{proof}
Suppose \(g\ge1\). Denote
\[
        \kappa=m-h,\qquad \ell=r-\kappa+1.
\]
The \(h\) coordinates equal to one always contribute \(h\) successes, while the zero coordinates contribute none. Hence the event \(\{N\ge m\}\) is equivalent to requiring at least \(\kappa=m-h\) successes among the \(r\) interior Bernoulli variables with common success probability \(p\). Because
$$
        \tau=\Pp(N\ge m)\in(0,1),
$$
we must have
$
        1\le \kappa\le r.
$
Indeed, \(\kappa\le0\) would make the tail probability equal to one, whereas \(\kappa>r\) would make it equal to zero. Therefore
$$
        \tau
        =
        D_{r,\kappa}(p)
        =
        \Pp\left(\operatorname{Bin}(r,p)\ge\kappa\right).
$$
Now replace one zero coordinate by $\varepsilon>0$ and vary the common
interior probability to $p=p(\varepsilon)$ in order to keep the same tail
level $\tau$.  Conditional on the new coordinate, the tail probability is
\[
        (1-\varepsilon)D_{r,\kappa}(p(\varepsilon))
        +
        \varepsilon D_{r,\kappa-1}(p(\varepsilon)).
\]
Since
\[
        D_{r,\kappa-1}(p)-D_{r,\kappa}(p)
        =
        \Pp\left(\operatorname{Bin}(r,p)=\kappa-1\right),
\]
writing
\[
        C_-(p)=\Pp\left(\operatorname{Bin}(r,p)=\kappa-1\right)
\]
we have
\begin{equation}\label{eq:zero-perturb}
        D_{r,\kappa}(p(\varepsilon))
        +\varepsilon C_-(p(\varepsilon))
        =
        \tau.
\end{equation}
Since
\[
        D_{r,\kappa}'(p(0))
        =
        \frac{\ell}{1-p(0)}C_-(p(0))>0,
\]
Eq.~\eqref{eq:zero-perturb} uniquely determines $p=p(\varepsilon)$ as a differentiable function of
$\varepsilon$ for all sufficiently small $\varepsilon\ge0$.
Differentiating Eq.~\eqref{eq:zero-perturb} at $\varepsilon=0$ gives
\[
        D_{r,\kappa}'(p(0))\dot p(0)+C_-(p(0))=0.
\]
 Hence
\begin{equation}\label{eq:pprime-zero}
        \dot p(0)
        =
        -\frac{C_-(p(0))}{D_{r,\kappa}'(p(0))}
        =
        -\frac{1-p(0)}{\ell}.
\end{equation}
Along this path,
\[
        d({\bm q}(\varepsilon))=\kappa-rp(\varepsilon)-\varepsilon.
\]
Hence
\begin{equation}\label{eq:dprime-zero}
        \dot d(0)
        =
        -r\dot p(0)-1
        =
        \frac{d(0)-1}{\ell}.
\end{equation}
We next compute $\rho({\bm q})$. Along the perturbation path,
the Bernoulli probability vector is
\[
\left(
\underbrace{1,\ldots,1}_{h},
\underbrace{p(\varepsilon),\ldots,p(\varepsilon)}_{r},
\varepsilon,
\underbrace{0,\ldots,0}_{g-1}
\right).
\]
By Eq.~\eqref{eq:rho_q}, the coordinates equal to zero or one make no
contribution to $\rho({\bm q})$. Hence only the $r$ interior
coordinates and the perturbed coordinate $\varepsilon$ contribute.

Consider first one of the $r$ interior coordinates. After removing this
coordinate, the $h$ coordinates equal to one already contribute $h$
successes. Thus the remaining random coordinates must contribute exactly
\[
        m-1-h=\kappa-1
\]
successes. Conditioning on the Bernoulli variable with success probability
$\varepsilon$ gives
\[
\begin{aligned}
\Pp\left(N_{-i}=m-1\right)
&=
(1-\varepsilon)
\Pp\left(\operatorname{Bin}(r-1,p(\varepsilon))=\kappa-1\right)\\
&\quad+
\varepsilon
\Pp\left(\operatorname{Bin}(r-1,p(\varepsilon))=\kappa-2\right).
\end{aligned}
\]
Therefore, the total contribution of the $r$ interior coordinates is
\[
\begin{aligned}
&r\,p(\varepsilon)\left(1-p(\varepsilon)\right)
\Big[
(1-\varepsilon)
\Pp\left(\operatorname{Bin}(r-1,p(\varepsilon))=\kappa-1\right)\\
&\hspace{32mm}
+
\varepsilon
\Pp\left(\operatorname{Bin}(r-1,p(\varepsilon))=\kappa-2\right)
\Big].
\end{aligned}
\]
Note that
\[
        D_{r,\kappa}'(p)
        =
        r\,\Pp\left(\operatorname{Bin}(r-1,p)=\kappa-1\right)
\]
and
\[
\begin{aligned}
        C_-'(p)
        &=
        r\Big[
        \Pp\left(\operatorname{Bin}(r-1,p)=\kappa-2\right)
        -
        \Pp\left(\operatorname{Bin}(r-1,p)=\kappa-1\right)
        \Big],
\end{aligned}
\]
this contribution can be rewritten as
\[
        \omega(p(\varepsilon))
        \left[
        D_{r,\kappa}'(p(\varepsilon))
        +
        \varepsilon C_-'(p(\varepsilon))
        \right],
\]
where
\(
        \omega(p)=p(1-p).
\)
The perturbed coordinate itself contributes
\[
\begin{aligned}
&\varepsilon(1-\varepsilon)
\Pp\left(\operatorname{Bin}(r,p(\varepsilon))=\kappa-1\right)=
\varepsilon(1-\varepsilon)C_-(p(\varepsilon)).
\end{aligned}
\]
Combining the two contributions gives
\begin{equation}\label{eq:rho-zero}
\begin{split}
\rho({\bm q}(\varepsilon))
&=
\omega(p(\varepsilon))
\left[
D_{r,\kappa}'(p(\varepsilon))
+
\varepsilon C_-'(p(\varepsilon))
\right]+
\varepsilon(1-\varepsilon)C_-(p(\varepsilon)).
\end{split}
\end{equation}
Differentiating Eq.~\eqref{eq:rho-zero} with respect to $\varepsilon$ and
evaluating at $\varepsilon=0$ gives
\begin{equation}\label{eq:rhodot-zero-expanded}
\begin{split}
\left.
\frac{d}{d\varepsilon}\rho({\bm q}(\varepsilon))
\right|_{\varepsilon=0}
&=
\dot p(0)
\left[
\omega'(p(0))D_{r,\kappa}'(p(0))
+
\omega(p(0))D_{r,\kappa}''(p(0))
\right]\\
&\quad+
\omega(p(0))C_-'(p(0))
+
C_-(p(0)).
\end{split}
\end{equation}
For simplicity, write
\[
        \dot\rho(0)
        =
        \left.
        \frac{d}{d\varepsilon}\rho({\bm q}(\varepsilon))
        \right|_{\varepsilon=0}.
\]
Using
\[
\begin{split}
\omega'(p(0))D_{r,\kappa}'(p(0))
+\omega(p(0))D_{r,\kappa}''(p(0))
&=
D_{r,\kappa}'(p(0))
\left(
\kappa-(r+1)p(0)
\right),\\
\omega(p(0))C_-'(p(0))+C_-(p(0))
&=
d({\bm q}(0))C_-(p(0)),\\
\dot p(0)D_{r,\kappa}'(p(0))
&=
-C_-(p(0)),
\end{split}
\]
we obtain
\[
\begin{aligned}
\dot\rho(0)
&=
-C_-(p(0))
\left(
\kappa-(r+1)p(0)
\right)
+
d({\bm q}(0))C_-(p(0))\\
&=
p(0)C_-(p(0)).
\end{aligned}
\]
Moreover,
\[
\begin{aligned}
\rho({\bm q}(0))
&=
\omega(p(0))D_{r,\kappa}'(p(0))=
p(0)\ell C_-(p(0)).
\end{aligned}
\]
Therefore
\begin{equation}\label{eq:rhoprime-zero}
        \dot\rho(0)
        =
        \frac{\rho({\bm q}(0))}{\ell}.
\end{equation}
Similarly, along the perturbation path write
\[
        J({\bm q}(\varepsilon))
        =
        \frac{d({\bm q}(\varepsilon))}
             {\rho({\bm q}(\varepsilon))}.
\]
Hence
\[
\begin{aligned}
\left.
\frac{d}{d\varepsilon}
J({\bm q}(\varepsilon))
\right|_{\varepsilon=0}
&=
\frac{
\dot d(0)\rho({\bm q}(0))
-
d({\bm q}(0))\dot\rho(0)
}{
\rho({\bm q}(0))^2
}\\
&=
-\frac{1}
{\ell\,\rho({\bm q}(0))}
<0.
\end{aligned}
\]
Since ${\bm q}(\varepsilon)$ was constructed so that
\(
        {\bm q}(\varepsilon)\in\mathcal L_\tau
\)
for all sufficiently small $\varepsilon\ge0$, the negative derivative implies
\[
        J({\bm q}(\varepsilon))
        <
        J({\bm q}(0))
\]
for all sufficiently small $\varepsilon>0$. This contradicts the assumed
minimality of ${\bm q}(0)$ on $\mathcal L_\tau$. Therefore $g=0$.
\end{proof}

\subsection{Proof of Lemma~\ref{lem:no-one}}

 \begin{proof} Let $\mathbf q^*$ be the minimizer selected in  Lemma~\ref{lem:no-one}; thus $\mathbf q^*$ has the largest possible number of interior coordinates among all minimizers of $J$ over $\mathcal{L}_{\tau}$, and, after a permutation of coordinates, \[ \mathbf q^* = \left( \underbrace{1,\ldots,1}_{h}, \underbrace{p,\ldots,p}_{r}, \underbrace{0,\ldots,0}_{g} \right), \qquad 0<p<1. \] Suppose, to the contrary, that $h\ge1$. As in the proof of
Lemma~\ref{lem:no-zero}, we perturb one boundary coordinate while adjusting
the common interior probability so that the tail level remains equal to
$\tau$. Specifically, replace one coordinate equal to one by
$1-\varepsilon$, and let the common interior probability vary from $p(0)$
to $p(\varepsilon)$. Denote
\[
        \kappa=m-h,
        \qquad
        C(p)=\Pp\left(\operatorname{Bin}(r,p)=\kappa\right).
\]
Before the perturbation,
\[
        \tau
        =
        D_{r,\kappa}(p(0)).
\]
After the perturbation, the coordinate $1-\varepsilon$ fails with
probability $\varepsilon$. Conditional on this failure, the $r$ interior
Bernoulli variables must supply one additional success. Hence the fixed-tail
condition is
\begin{equation}\label{eq:one-perturb}
        D_{r,\kappa}(p(\varepsilon))
        -
        \varepsilon C(p(\varepsilon))
        =
        \tau.
\end{equation}
Since
\[
        D_{r,\kappa}'(p(0))
        =
        \frac{\kappa}{p(0)}C(p(0))>0,
\]
Eq.~\eqref{eq:one-perturb} determines $p(\varepsilon)$ as a differentiable
function of $\varepsilon$ for all sufficiently small $\varepsilon\ge0$.
Differentiating at $\varepsilon=0$ gives
\[
        D_{r,\kappa}'(p(0))\dot p(0)-C(p(0))=0,
\]
and therefore
\begin{equation}\label{eq:pprime-one}
        \dot p(0)=\frac{p(0)}{\kappa}.
\end{equation}
Along this path,
\[
        d({\bm q}(\varepsilon))
        =
        \kappa+\varepsilon-rp(\varepsilon).
\]
Therefore
\begin{equation}\label{eq:dprime-one}
\left.
\frac{d}{d\varepsilon}
d({\bm q}(\varepsilon))
\right|_{\varepsilon=0}
=
1-r\dot p(0)
=
\frac{d({\bm q}(0))}{\kappa}.
\end{equation}
For simplicity, write
\[
        \dot d(0)
        =
        \left.
        \frac{d}{d\varepsilon}
        d({\bm q}(\varepsilon))
        \right|_{\varepsilon=0}.
\]
We next compute the first derivative of $\rho({\bm q}(\varepsilon))$.
Using Eq.~\eqref{eq:rho_q}, the coordinates equal to zero or one make no
contribution to $\rho$. Hence only the $r$ interior coordinates and the
perturbed coordinate $1-\varepsilon$ contribute. This gives
\begin{equation}\label{eq:rho-one}
\begin{split}
\rho({\bm q}(\varepsilon))
&=
p(\varepsilon)\left(1-p(\varepsilon)\right)
\left[
D_{r,\kappa}'(p(\varepsilon))
-
\varepsilon C'(p(\varepsilon))
\right]+
\varepsilon(1-\varepsilon)C(p(\varepsilon)).
\end{split}
\end{equation}
Differentiating Eq.~\eqref{eq:rho-one} at $\varepsilon=0$ gives
\begin{equation}\label{eq:rhodot-one-expanded}
\begin{split}
\left.
\frac{d}{d\varepsilon}
\rho({\bm q}(\varepsilon))
\right|_{\varepsilon=0}
&=
\dot p(0)
\left[
\left(1-2p(0)\right)D_{r,\kappa}'(p(0))
+
p(0)\left(1-p(0)\right)D_{r,\kappa}''(p(0))
\right]\\
&\quad
-
p(0)\left(1-p(0)\right)C'(p(0))
+
C(p(0)).
\end{split}
\end{equation}
For simplicity, write
\[
        \dot\rho(0)
        =
        \left.
        \frac{d}{d\varepsilon}
        \rho({\bm q}(\varepsilon))
        \right|_{\varepsilon=0}.
\]
Using
\[
    \frac{D_{r,\kappa}''(p(0))}
         {D_{r,\kappa}'(p(0))}
    =
    \frac{\kappa-1}{p(0)}
    -
    \frac{r-\kappa}{1-p(0)}
\]
and
\[
    \frac{C'(p(0))}{C(p(0))}
    =
    \frac{\kappa}{p(0)}
    -
    \frac{r-\kappa}{1-p(0)},
\]
we obtain
\[
\begin{split}
&
\left(1-2p(0)\right)D_{r,\kappa}'(p(0))
+
p(0)\left(1-p(0)\right)D_{r,\kappa}''(p(0))
\\
&\qquad=
D_{r,\kappa}'(p(0))
\left(
\kappa-(r+1)p(0)
\right),
\end{split}
\]
and
\[
        p(0)\left(1-p(0)\right)C'(p(0))
        =
        d({\bm q}(0))C(p(0)).
\]
Together with Eq.~\eqref{eq:pprime-one},
\[
        \dot p(0)D_{r,\kappa}'(p(0))
        =
        C(p(0)).
\]
Substituting these identities into Eq.~\eqref{eq:rhodot-one-expanded} gives
\[
\begin{aligned}
\dot\rho(0)
&=
C(p(0))
\left(
\kappa-(r+1)p(0)
\right)
-
d({\bm q}(0))C(p(0))
+
C(p(0))
\\
&=
\left(1-p(0)\right)C(p(0)).
\end{aligned}
\]
Moreover,
\[
\begin{aligned}
\rho({\bm q}(0))
&=
p(0)\left(1-p(0)\right)
D_{r,\kappa}'(p(0))
\\
&=
\kappa\left(1-p(0)\right)C(p(0)).
\end{aligned}
\]
Therefore
\begin{equation}\label{eq:rhoprime-one}
        \dot\rho(0)
        =
        \frac{\rho({\bm q}(0))}{\kappa}.
\end{equation}
Using the similar idea, we have
\[
        J({\bm q}(\varepsilon))
        =
        \frac{d({\bm q}(\varepsilon))}
             {\rho({\bm q}(\varepsilon))}.
\]
Using Eqs.~\eqref{eq:dprime-one} and \eqref{eq:rhoprime-one}, we obtain
\[
\begin{aligned}
\left.
\frac{d}{d\varepsilon}
J({\bm q}(\varepsilon))
\right|_{\varepsilon=0}
&=
\frac{
\dot d(0)\rho({\bm q}(0))
-
d({\bm q}(0))\dot\rho(0)
}{
\rho({\bm q}(0))^2
}
\\
&=0.
\end{aligned}
\]
Thus the first derivative along the feasible perturbation path vanishes, so
we compute the second derivative. Differentiating Eq.~\eqref{eq:one-perturb} twice gives
\[
    D_{r,\kappa}''(p(0))\dot p(0)^2
    +
    D_{r,\kappa}'(p(0))\ddot p(0)
    -
    2C'(p(0))\dot p(0)
    =
    0.
\]
Then, it follows that
\begin{equation}\label{eq:psecond-one}
    \ddot p(0)
    =
    \frac{p(0)^2}{\kappa^2}
    \left(
        \frac{\kappa+1}{p(0)}
        -
        \frac{r-\kappa}{1-p(0)}
    \right).
\end{equation}
Next, we compute the second derivative of $\rho({\bm q}(\varepsilon))$. Since
\[
        p(1-p)D_{r,\kappa}'(p)
        =
        \kappa(1-p)C(p)
\]
and
\[
        p(1-p)C'(p)
        =
        (\kappa-rp)C(p),
\]
Eq.~\eqref{eq:rho-one} can be rewritten as
\begin{equation}\label{eq:rho-one-rewritten}
\begin{split}
\rho({\bm q}(\varepsilon))
&=
\kappa\left(1-p(\varepsilon)\right)C(p(\varepsilon))
-
\varepsilon
\left(
\kappa-rp(\varepsilon)
\right)
C(p(\varepsilon))
+
\varepsilon(1-\varepsilon)C(p(\varepsilon)).
\end{split}
\end{equation}
Differentiating Eq.~\eqref{eq:rho-one-rewritten} twice at
$\varepsilon=0$ gives
\begin{equation}\label{eq:rhosecond-expanded}
\begin{split}
\ddot\rho(0)
&=
\left.
\frac{d^2}{dp^2}
\left[
\kappa(1-p)C(p)
\right]
\right|_{p=p(0)}
\dot p(0)^2
\\
&\quad+
\left.
\frac{d}{dp}
\left[
\kappa(1-p)C(p)
\right]
\right|_{p=p(0)}
\ddot p(0)
\\
&\quad-
2
\left.
\frac{d}{dp}
\left[
(\kappa-rp)C(p)
\right]
\right|_{p=p(0)}
\dot p(0)
\\
&\quad+
2C'(p(0))\dot p(0)-2C(p(0)).
\end{split}
\end{equation}
Here
\[
        \ddot\rho(0)
        =
        \left.
        \frac{d^2}{d\varepsilon^2}
        \rho({\bm q}(\varepsilon))
        \right|_{\varepsilon=0}.
\]
The required derivatives satisfy
\[
\begin{split}
\frac{
\dfrac{d}{dp}\left[\kappa(1-p)C(p)\right]
}{
\kappa(1-p)C(p)
}
&=
\frac{\kappa}{p}
-
\frac{r-\kappa+1}{1-p},
\\
\frac{
\dfrac{d^2}{dp^2}\left[\kappa(1-p)C(p)\right]
}{
\kappa(1-p)C(p)
}
&=
\left(
\frac{\kappa}{p}
-
\frac{r-\kappa+1}{1-p}
\right)^2
-
\frac{\kappa}{p^2}
-
\frac{r-\kappa+1}{(1-p)^2},
\\
\frac{d}{dp}
\left[
(\kappa-rp)C(p)
\right]
&=
-rC(p)+(\kappa-rp)C'(p).
\end{split}
\]
Substituting these identities into Eq.~\eqref{eq:rhosecond-expanded},
together with Eqs.~\eqref{eq:pprime-one} and \eqref{eq:psecond-one}, gives
\begin{equation}\label{eq:rhosecond-one}
    \frac{\ddot\rho(0)}{\rho({\bm q}(0))}
    =
    -\frac{p(0)(r+1)}
    {\kappa^2\left(1-p(0)\right)}.
\end{equation}
On the other hand,
\[
        \ddot d(0)=-r\ddot p(0),
\]
where
\[
        \ddot d(0)
        =
        \left.
        \frac{d^2}{d\varepsilon^2}
        d({\bm q}(\varepsilon))
        \right|_{\varepsilon=0}.
\]
Combining this identity with Eqs.~\eqref{eq:psecond-one} and
\eqref{eq:rhosecond-one} gives
\begin{equation}\label{eq:second-combination}
    \ddot d(0)
    -
    d({\bm q}(0))
    \frac{\ddot\rho(0)}
         {\rho({\bm q}(0))}
    =
    -\frac{p(0)(r-\kappa)}
    {\kappa^2\left(1-p(0)\right)}.
\end{equation}
Since the first derivative of $J({\bm q}(\varepsilon))$ at
$\varepsilon=0$ is zero, its second derivative satisfies
\[
\begin{aligned}
\left.
\frac{d^2}{d\varepsilon^2}
J({\bm q}(\varepsilon))
\right|_{\varepsilon=0}
&=
\frac{1}{\rho({\bm q}(0))}
\left[
\ddot d(0)
-
d({\bm q}(0))
\frac{\ddot\rho(0)}
     {\rho({\bm q}(0))}
\right].
\end{aligned}
\]
Therefore, if $\kappa<r$,
\[
\left.
\frac{d^2}{d\varepsilon^2}
J({\bm q}(\varepsilon))
\right|_{\varepsilon=0}
=
-\frac{p(0)(r-\kappa)}
{\kappa^2
\left(1-p(0)\right)
\rho({\bm q}(0))}
<0.
\]
The path ${\bm q}(\varepsilon)$ was constructed so that
\[
        {\bm q}(\varepsilon)\in\mathcal L_\tau
\]
for all sufficiently small $\varepsilon\ge0$. Since the first derivative
vanishes and the second derivative is negative, Taylor's formula gives
\[
\begin{aligned}
J({\bm q}(\varepsilon))
&=
J({\bm q}(0))
+
\frac{\varepsilon^2}{2}
\left.
\frac{d^2}{d\varepsilon^2}
J({\bm q}(\varepsilon))
\right|_{\varepsilon=0}
+
o(\varepsilon^2)
\\
&<
J({\bm q}(0))
\end{aligned}
\]
for all sufficiently small $\varepsilon>0$. This contradicts the minimality
of ${\bm q}(0)$ on $\mathcal L_\tau$.

It remains to consider the case $\kappa=r$. In this case
\[
        D_{r,r}(p)=p^r,
        \qquad
        C(p)=p^r,
\]
so Eq.~\eqref{eq:one-perturb} becomes
\(
        \tau=(1-\varepsilon)p(\varepsilon)^r.
\)
Furthermore,
\[
        d({\bm q}(\varepsilon))
        =
        \varepsilon
        +
        r\left(1-p(\varepsilon)\right).
\]
Using Eq.~\eqref{eq:rho-one-rewritten} with $\kappa=r$ gives
\[
\begin{aligned}
\rho({\bm q}(\varepsilon))
&=
(1-\varepsilon)p(\varepsilon)^r
\left[
\varepsilon+r\left(1-p(\varepsilon)\right)
\right]=
\tau\,d({\bm q}(\varepsilon)).
\end{aligned}
\]
Hence
\[
J(\mathbf q(\varepsilon))
=
\frac{d(\mathbf q(\varepsilon))}
{\rho(\mathbf q(\varepsilon))}
=
\frac{1}{\tau}
\]
is constant along this path. Therefore, for every sufficiently small
$\varepsilon>0$, the perturbed vector $\mathbf q(\varepsilon)$ is
another minimizer of $J$ over $\mathcal{L}_{\tau}$. The coordinate
formerly equal to one has become $1-\varepsilon\in(0,1)$, while the
original $r$ interior coordinates remain in $(0,1)$. Thus
$\mathbf q(\varepsilon)$ has $r+1$ interior coordinates.
This contradicts the defining property of $\mathbf q^*$, which has
the largest possible number of interior coordinates among all
minimizers of $J$ over $\mathcal{L}_{\tau}$. Therefore $h=0$.

\end{proof}


\bibliographystyle{plainnat}
\bibliography{order_statistics_aap}

\end{document}